\DeclareMathAlphabet{\mathcal}{OMS}{cmsy}{m}{n} 
\renewcommand{\Pr}{{\textup{Pr}}}
\renewcommand{\Pr}{\mathop{\bf Pr\/}}
\renewcommand{\P}{\mathop{\bf Pr\/}}
\newcommand{\Ex}{\mathop{\bf E\/}}
\newcommand{\xmark}{\ding{55}}%
\newcommand{\Var}{{\bf Var}}
\newcommand{\deq}{\mathrel{\mathop:}=}
\renewcommand{\phi}{\varphi}
\newcommand{\ignore}[1]{}
\newtheorem{attackgame}[theorem]{Attack Game}
\begin{document}
\title{T/Key: Second-Factor Authentication From Secure Hash Chains}

\author{Dmitry Kogan}
\affiliation{%
  \institution{Stanford University}
}
\authornote{Both authors contributed equally to the paper}
\email{dkogan@cs.stanford.edu}

\author{Nathan Manohar}
\affiliation{%
  \institution{Stanford University}
}
\authornotemark[1]
\email{nmanohar@cs.stanford.edu}

\author{Dan Boneh}
\affiliation{%
  \institution{Stanford University}
}
\email{dabo@cs.stanford.edu}


\begin{abstract}
Time-based one-time password (TOTP) systems in use today require
storing secrets on both the client and the server.  As a result, an
attack on the server can expose all second factors for all users in
the system.  We present T/Key, a time-based one-time password system
that requires no secrets on the server.  Our work modernizes the classic
S/Key system and addresses the challenges in making such a system
secure and practical. At the heart of our construction is a new lower bound
analyzing the hardness of inverting hash chains composed of independent 
random functions, which formalizes the security of this widely used primitive.
Additionally, we develop a near-optimal algorithm for quickly generating the 
required elements in a hash chain with little memory on the client.  
We report on our implementation of T/Key as an Android application. T/Key can be 
used as a replacement for current TOTP systems, and it remains secure in the
event of a server-side compromise.
The cost, as with S/Key, is that one-time passwords are longer 
than the standard six characters used in TOTP.  
\end{abstract}

\iftoggle{ccsversion}{        
\begin{CCSXML}
    <ccs2012>
    <concept>
    <concept_id>10002978.10002979.10002982.10011600</concept_id>
    <concept_desc>Security and privacy~Hash functions and message authentication 
    codes</concept_desc>
    <concept_significance>500</concept_significance>
    </concept>
    <concept>
    <concept_id>10002978.10002991.10002992.10011619</concept_id>
    <concept_desc>Security and privacy~Multi-factor authentication</
    concept_desc>
    <concept_significance>500</concept_significance>
    </concept>
    </ccs2012>
\end{CCSXML}

\ccsdesc[500]{Security and privacy~Hash functions and message authentication 
codes}
\ccsdesc[500]{Security and privacy~Multi-factor authentication}

\keywords{two-factor authentication, hash chains} 
}

\maketitle

\section{Introduction}

Static passwords are notorious for their security
weaknesses~\cite{morris1979password, bonneau2010password,
    wiki:icloudhack, wiki:amhack, wiki:yahoohack}, driving
commercial adoption of two-factor authentication schemes, such as
Duo~\cite{duoprompt}, Google authenticator~\cite{google2sv}, and many
others.  Several hardware tokens provide challenge-response
authentication using a protocol standardized by the FIDO industry
alliance~\cite{u2f}.

Nevertheless, for desktop and laptop authentication, there is a strong
desire to use the phone as a second factor instead of a dedicated
hardware token~\cite{wu2004secure,duoprompt, McCune, SJSN}.  Several systems 
support
phone-based challenge-response authentication (e.g.,~\cite{duoprompt}), but
they all provide a fall back mode to a one-time password scheme.  The
reason is that challenge-response requires two-way communication with
the phone: uploading the challenge to the phone and sending the
response from the phone to the server.  However, one cannot
rely on the user's phone to always be connected.  When the user is
traveling, she may not have connectivity under the local cell
provider, but may still wish to use her laptop to log in at a hotel or to
log in using a workstation at an Internet Cafe.  In this case,
authentication systems, such as Duo, fall back to a standard timed-based
one-time password (TOTP) scheme.

Standard TOTP schemes~\cite{totp} operate using a shared key $k$
stored on both the phone and the authentication server.  The phone
displays a six digit code to the user, derived from evaluating
$\textsc{hmac}(k,t)$, where $t$ is the current time, rounded to the
current 30 second multiple.  This way, the code changes every 30
seconds and can only be used once, hence the name {\em one-time
    password}.  The user enters the code on her laptop, which sends it to
the server, and the server verifies the code using the same key~$k$.
The server accepts a window of valid codes to account for clock-skew.

The benefit of TOTP schemes is that they only require {\em one-way}
communication from the phone to the laptop, so they can function
even if the phone is offline (challenge-response requires two-way
communication with the phone and is mostly used when the phone is
online).  However, a difficulty with current TOTP is that the server
must store the user's secret key~$k$ {\em in the clear}.  Otherwise,
the server cannot validate the 6-digit code from the user.  With this
design, a break-in at the server can expose the second factor secret
for all users in the system.  A well-publicized event of this type is
the attack on RSA SecurID, which led to subsequent attacks
on companies that rely on SecurID~\cite{SecurIDattack}.

\paragraph{Our work.}
We introduce a TOTP system called T/Key that requires no secrets on
the server.  Our starting point is a classic one-time password system
called S/Key~\cite{skey}, which is not time-based and suffers from a
number of security weaknesses, discussed in the next section.  Our
work modernizes S/Key, makes it time-based (hence the name T/Key), and
addresses the resulting security challenges.

In T/Key, the phone generates a hash chain seed and uses this seed to 
construct a {\em long} hash chain, say of length two million, as 
depicted in Figure~\ref{sketch}. The phone encodes the tail of the 
chain $T$ in a QR code, which the user scans with her laptop and
sends to the authentication server for storage.  The phone then starts
at the element immediately preceding $T$ in the chain and walks one 
step backwards along the chain once every 30 seconds. It does so 
until it reaches the head of the chain, which is the seed.  At every 
step, the phone displays the current element in the chain, and the user 
logs in by scanning the displayed code on her laptop.  At the rate of 
one step every 30 seconds, a single chain is good for approximately two 
years, at which point the phone generates a new chain.  The details 
of the scheme are presented in Section~\ref{sec:description}.  As in
TOTP, there is only one-way communication from the phone to the 
laptop, and the phone can be offline. Moreover, a server compromise reveals 
nothing of value to the attacker.


\begin{figure}[t]
    \centering
    \resizebox{8cm}{!}{%
        \begin{tikzpicture}
        [place/.style={circle,draw,fill=blue!20,thick,
            inner sep=0pt,minimum size=11mm}]
        \node at (0,0)  [place]  (label0) [label=below:{$\substack{\text{Sent to 
        the server}\\ \text{at setup}}$}] {Tail};
        \node at (2,0)  [place]  (label1) [label=below:{$\substack{\text{Used to 
        } \\ \text{to authenticate}\\ \text{at }t=1}$}]  {};
        \phantom{\node at (4,0)  [place] (label2) {};}
        \phantom{\node at (4,0) [place] (label3) {};}
        \node at (6,0) [place] (labelkm)[label=below:{$\substack{\text{Used to 
                } \\ \text{to authenticate}\\ \text{at }t=k-1}$}] {};
        \node at (8,0) [place] (labelk) [label=below:Head] {Head};
        
        \draw (4,0) node [align=center] {$\cdots$};
        
        \draw[->,>=stealth',semithick] (label1) --  node[above] {$h_k$} 
        ++ (label0);
        \draw[->,>=stealth',semithick] (labelk)--  node[above] {$h_1$} 
        ++  (labelkm) ;
        \draw[->,>=stealth',semithick] (labelkm) --  node[above] {$h_{2}$} 
        ++  (label3) ;
        \draw[->,>=stealth',semithick] (label2) -- (label1) ;
        \end{tikzpicture}
    }
\caption{Sketch of T/Key}
\label{sketch}
\end{figure}
\begin{table*}
    \caption{A comparison of OTP schemes.\label{tbl:compare}}    
        \begin{tabular}{@{}lccS@{}}
            \toprule
            & No server secrets & 
            Time-varying passwords & {Password length in bits} \\
            &&& {(at $2^{128}$ security)} \\
            \midrule
            S/Key & \checkmark  & \xmark & {N/A}
            \\
            
            TOTP (HMAC) & \xmark & \checkmark  & 20  \\
            
            Digital Signatures (ECDSA/EdDSA) & \checkmark  & \checkmark & 512 \\
            \textbf{T/Key} & \checkmark & \checkmark & 130  \\
            \bottomrule
            \addlinespace[\belowrulesep]
        \end{tabular}%

    \small{Note: S/Key does not support this level of security.}
\end{table*}

Such a TOTP scheme presents a number of challenges.  First, security
is unclear.  Imagine an attacker breaks into the server and steals the
top of the chain $T$.  The attacker knows the {\em exact time} when
the millionth inverse of $T$ will be used as the second factor.  That
time is about a year from when the break-in occurs, which means that
the attacker can take a year to compute the millionth inverse of $T$.
This raises the following challenge: for a given $k$, how difficult
is it to compute the $k$-th inverse of $T$?   

If the same function is used throughout the entire hash chain, as in S/Key, 
the scheme is vulnerable to ``birthday attacks"~\cite{HJP} and is easier to break than the original hash function~\cite{hastad}.
A standard solution is to use a different hash function at every step
in the chain.  The question then is the following:  if $H$ is
the composition of $k$ {\em random} hash functions, namely
\[  H(x) \deq h_k(h_{k-1}(\cdots(h_2(h_1(x))) \cdots )),  \]
how difficult is it to invert $H$ given $H(x)$ for a random $x$ in the domain?
We prove a time lower bound for this problem in the random oracle model.
Additionally, given the possibility of making time-space tradeoffs in attacks 
against cryptographic primitives~\cite{hellman1980cryptanalytic, 
oechslin2003,de2010}, a natural follow up question is whether the scheme is still 
secure against offline attackers.
Building on the recent results of Dodis, Guo and Katz~\cite{DGK}, we prove a 
time-space lower bound for this problem that bounds the time to invert $H$, 
given a bounded amount of preprocessing space. As hash chains are a widely used primitive, we believe that our lower bounds, both with and without preprocessing, may be of independent interest.

From this security analysis, we derive concrete parameters for 
T/Key.  For
$2^{128}$ security, every one-time password must be 130 bits. Since entering these one-time passwords manually would be cumbersome, our phone implementation displays a QR
code containing the one-time password, which the user scans using her
laptop camera.  We describe our implementation in
Section~\ref{sec:impl} and explain that T/Key can be used as a drop-in
replacement for Google Authenticator. The benefit is that T/Key
remains secure in the event of a server-side compromise.

We also note that USB-based one-time password tokens, such as Yubikey~\cite{Yubikey},
can be set up to emulate a USB keyboard.  When the user presses the
device button, the token ``types'' the one-time password into a
browser field on the laptop.  This one-way communication setup is well
suited for T/Key: the token computes a T/Key one-time password and
enters it into a web page by emulating a keyboard.  Again,
this TOTP system remains secure in the event of a server-side
compromise.

\medskip
The second challenge we face is performance.  Because the hash chain
is so long, it is unreasonable for the phone to recompute the entire
hash chain on every login attempt, since doing so would take several seconds
for every login. Several amortized algorithms have been developed for
quickly walking backwards on a hash chain, while using little memory
on the phone~\cite{jakobsson,cj}.  The problem is that these schemes
are designed to walk backwards a {\em single} step at a time.  In our
case, the authenticator app might not be used for a month or, perhaps, even
longer.  Once the user activates the app, the app must quickly
calculate the point in the hash chain corresponding to the current
time.  It would take too long to walk backwards from the last login point,
one step at a time, to reach the required point.

Instead, we develop a new approach for pebbling a hash chain that
enables a quick calculation of the required hash chain elements.  We model
the user's login attempts as a Poisson process with parameter $\lambda$
and work out a near-optimal method to quickly compute the required points
with little memory on the phone.

\paragraph{Other approaches.}
T/Key is not the only way to provide a TOTP with no secrets on the
server.  An alternate approach is to use a digital signature.  
The phone maintains the signing key, while the server maintains the
signature verification key.  On every authentication attempt, the phone
computes the signature on the current time, rounded to a multiple of 30 seconds.
This can be scanned into the laptop and sent to the server to be verified
using the verification key.  

While this signature-based scheme has similar security properties to T/Key, it
has a significant limitation. Standard digital signatures such as
ECDSA~\cite{ecdsa} and EdDSA~\cite{bernstein2012high, rfc8032} are {\em 512 bits
long} for $2^{128}$ security\footnote{BLS signatures~\cite{bls} are shorter, but
require a pairing operation on the server which makes them less attractive in
these settings.}. 
These are about four times as long as the tokens used in T/Key. For example,
when encoded as QR codes, the longer tokens result in a denser QR code.
To preserve the maximal scanning distance, the denser QR code must be displayed 
in a larger image~\cite{qrsize}. (Alternatively, the signatures could be
decomposed into 
several QR codes of the original size, but scanning multiple images introduces
additional complexity for the user.) Short authentication tokens might also be 
desirable in other applications such as Bluetooth Low Energy (which supports a 
$23$-byte long MTU~\cite{bluetooth}).

Table~\ref{tbl:compare} provides a comparison of the
different TOTP mechanisms and their properties.  The last column
shows the required length of the one-time password.

\paragraph{Beyond authentication.}
Hash chains come up in a number of other cryptographic settings,
such as Winternitz one-time signatures~\cite{winternitz} and the Merkle-Damgard 
construction~\cite{Merkle}. Existing security proofs for Winternitz signatures 
often only take into 
account the attacker's online work.  Our lower bound on inverting
hash chains is well suited for these settings and can be used to 
derive time-space tradeoff proofs of security for these constructions. 
This is especially relevant as these schemes are being 
standardized~\cite{xmss,McGrew,Katz}.


\section{Offline 2nd Factor Authentication}
\label{TFA}

\ignore{
Static passwords have been notorious for their insecurity for almost four
decades~\cite{morris1979password, bonneau2010password, wiki:icloudhack, 
wiki:amhack, wiki:yahoohack}. Due to these issues, companies as well as web 
providers 
have been increasingly adopting two-factor authentication schemes, schemes in 
which users 
need to provide more than one ``secret"~\cite{rsasecurid,google2sv}. In such 
schemes, the first factor is 
usually referred to as ``something that you know'' (typically, static 
passwords are still used in this step), with the second factor being a proof of 
``something that you have'' such as a 
designated hardware token~\cite{rsasecurid, u2f} or a mobile 
phone~\cite{wu2004secure}. The possession of such a device 
is ``proved" by providing a one-time password (OTP) generated by the device 
(possibly in response to some challenge) or delivered to it over an external 
communication channel (such as an SMS message or an Internet connection). 
}

We begin by briefly reviewing several approaches to one-time passwords
that are most relevant to our scheme.

\paragraph{S/Key}
The idea of a one-time password authentication scheme was first considered 
by Lamport~\cite{lamport81}. 
Loosely speaking, in such a scheme, following an initial setup phase, 
authentication is performed by the client presenting the server with a password 
that is hard for an attacker to guess, even given all previous communication 
between the server and the client. In particular, no password is valid for more 
than one authentication. 
In his work, Lamport proposed a concrete instantiation of this idea using 
\emph{hash chains}, and this idea has been subsequently developed and 
implemented under the name S/Key~\cite{skey}. The setup phase of S/Key consists 
of the client choosing a secret passphrase\footnote{Usually, the client's 
secret passphrase is concatenated with a random salt to prevent dictionary 
attacks and reduce the risk of reusing the same passphrase on multiple 
servers.} $x$ and sending the computed value $y_0=h^{(k)}(x)$ 
(where $h$ is some cryptographic hash function, $k$ is some integer, and 
$h^{(k)}$ denotes 
$k$ successive iterations of $h$) to the server, which the server then stores. 
Subsequently, to authenticate for the $i$th time, the client must present 
the server with $y_i=h^{(k-i)}(x)$, which the server can verify by computing 
$h(y_i)$ and comparing it to the stored value $y_{i-1}$. If the 
authentication 
is successful, the server updates its stored value to~$y_i$. 

S/Key has a number of undesirable properties. First, one-time passwords remain 
valid for an indefinite period of time unless used, making them vulnerable to 
theft and abuse. This vulnerability is magnified if the counter value for each
authentication attempt is communicated to the client by the server, as is the
case in both the original S/Key~\cite{skey} and in the newer OPIE~\cite{opie}
(presumably to allow for stateless clients).  In this common setting, the scheme
is vulnerable to a so-called ``small $n$" attack~\cite{mitchell96}, where an
attacker impersonating the server can cause the client to reveal a future
one-time password. Second, the fact that S/Key utilizes the same hash function
at every iteration in the chain makes it easier to break S/Key than
to break a single hash function (see Theorem~\ref{thm:hastad}). This
also implies that any modification to the scheme that requires using much longer
hash chains (such as, for example, a na\"ive introduction of time-based
passwords) could lead to insecurity.

\paragraph{HOTP}
In an \emph{HMAC-based one-time password scheme} (HOTP)~\cite{hotp}, 
a secret and a
counter, both shared between the server and the client, are used in conjunction
with a pseudorandom function (HMAC) to generate one-time passwords. 
The setup phase consists of the server and the client agreeing on
a random shared secret $k$ and initializing a counter value $c$ to $0$.
One-time passwords are then generated as $\texttt{HMAC}(k,c)$. The counter is
incremented by the client every time a password is generated and by the server
after every successful authentication. 

The most significant advantage of this scheme is that the number of
authentications is unbounded.  Moreover, it allows using short
one-time passwords without compromising security.  However, HOTP
still suffers from many of the weaknesses of S/Key, namely that
unused passwords remain valid for an indefinite period of time. 
A bigger concern is that the secret key $k$ must be stored on the
server, as discussed in the previous section.

\paragraph{TOTP}
Time-based one-time password schemes (TOTP)~\cite{totp} were introduced to 
limit the validity period of one-time passwords. 
In TOTP, the shared counter value used 
by HOTP is replaced by a timestamp-based 
counter. Specifically, the setup phase consists of the server and the client
agreeing on the `initial time' $t_0$ (usually the UNIX epoch) and a time 
slot size $I$ (usually 30 seconds), as well as on a secret key $k$. 
Subsequently, 
the client can authenticate by computing $\texttt{HMAC}(k,(t-t_0)/I)$, 
where $t$ is the time of authentication. 
\ignore{
To avoid failures resulting from delays 
that may naturally occur in the process (such as the time it takes the user to 
type-in the one-time password or the transmission delay), the server is advised 
to check the provided password against all past time-stamps that fall within 
some validity window (which is usually set to one time-slot~$I$). Since the 
one-time passwords are only valid during the corresponding time window, TOTP 
solves the weakness of indefinitely valid unused passwords present in the HOTP 
and S/Key schemes.} 
As with HOTP, the TOTP scheme is vulnerable to a server-side attack. 

\ignore{  
\paragraph{Public Key Schemes}
Authentication schemes based on public-key cryptography~\cite{rsa} succeed in 
not requiring any secret information to be stored by the verifier. 
However, while  such schemes have reached universal usage for 
server authentication~\cite{tlssecurity}, client-side certificates have gained 
limited deployment~\cite{clientcertadoption} for 
authenticating users on the web, due to a more complicated
user-experience~\cite{herley2009passwords} and a lack of portability between 
devices~\cite{seckeys}. Moreover, the relatively high length of digital 
signatures (the shortest signatures at a 128-bit security level are 384 bits 
long~\cite{bernstein2012high}) makes them difficult for manual entry.
}

\section{Our Construction} \label{sec:description}

T/Key combines the ideas used in S/Key and TOTP to achieve the best
properties of both schemes: T/Key stores no secrets on the
server and ensures that passwords are only valid for a short time
interval. The scheme works as follows:


\begin{figure*}
\centering
\resizebox{16cm}{!}{%
\begin{tikzpicture}
[place/.style={circle,draw,fill=blue!20,thick,
                 inner sep=0pt,minimum size=11mm}]

\node at (0,0)  [place]  (Pinit) {$p_{\rm init}$};
\node at (2.5,0)  [place]  (P1) {$p_1$};
\node at (5,0)  [place]  (P2) {$p_2$};
\phantom{\node at (7.5,0)  [place] (P3) {};}
\phantom{\node at (10,0) [place] (Pkm2) {};}
\node at (12.5,0) [place] (Pkm1) {$p_{k-1}$};
\node at (15,0) [place] (Pk) [label=below:head] {$p_k$};

\phantom{\node at (-.8, -3) [place] (T0) {};}
\phantom{\node at (15.8, -3) [place] (Tk) {};}
\node at (-0.25, -3.5) (tinit) {$t_{\rm init}$};
\node at (15.25, -3.5) (tmax) {$t_{\rm max}$};
\node at (5, -3.5) (tprev) {$t_{\rm prev}$};
\node at (12.5, -3.5) (tmax1) {$t_{\rm max} - 1$};

\node at (12.5,3) [place,fill=pink] (P) {$p$} edge [->,>=stealth',semithick,bend 
right=30] (P2);

\node at (8.75,2.25) [rotate=20] {$h_{k-2} \circ \cdots \circ h_2$} ;
\node at (8.25, 3.5) [rotate = 20] {?};
\node at (8.375, 3.2) [rotate = 20] {=};

\node at (5,-2) (Pprev) {$p_{\rm prev}$};

\draw (9,0) node [align=center] {$\cdots$}; 

\draw (1.25,0.4) node [align=center]  {$h_k$};
\draw (3.75,0.4) node [align=center]  {$h_{k-1}$};
\draw (6.25,0.4) node [align=center]  {$h_{k-2}$};
\draw (11.25,0.4) node [align=center]  {$h_2$};
\draw (13.75,0.4) node [align=center]  {$h_1$};

\draw[->,>=stealth',semithick] (P1) -- (Pinit) ;
\draw[->,>=stealth',semithick] (P2) -- (P1) ;
\draw[->,>=stealth',semithick] (P3) -- (P2) ;
\draw[->,>=stealth',semithick] (Pk) -- (Pkm1) ;
\draw[->,>=stealth',semithick] (Pkm1) -- (Pkm2) ;

\draw[->,>=stealth',semithick] (Pprev) -- (P2) ;

\draw[->,>=stealth',semithick] (T0) -- (Tk);

\draw[-,>=stealth',semithick] (-0.25,-3.25) -- (-0.25,-2.75);
\draw[-,>=stealth',semithick] (5, -3.25) -- (5, -2.75);
\draw[-,>=stealth',semithick] (12.5, -3.25) -- (12.5, -2.75);
\draw[-,>=stealth',semithick] (15.25, -3.25) -- (15.25, -2.75);

\usetikzlibrary{decorations.pathreplacing}
\draw[decorate, decoration={brace, amplitude=10pt}] (12.5, -2.5) -- (15.25, 
-2.5);
\node at (13.925, -1.75) (I) {$I \text{ seconds}$};

\draw[dashed, semithick] (12.5, 0.75) -- (12.5, 2.25);

\end{tikzpicture}
}

\caption{ A basic overview of T/Key.
A user has submitted the password $p$ at time $t_{\mathrm{max}} - 1$. Since the
previous login occurred at time $t_{\mathrm{prev}} = t_{\mathrm{init}} + 2$, the
server has stored $p_{\mathrm{prev}} = p_2$ as the previous password. To
authenticate the user, the server computes $h_{k-2}(\hdots(h_2(p))\hdots)$ and
checks if it is equal to $p_{\mathrm{prev}}
$. 
}\label{scheme:overview}
\end{figure*}

\paragraph{Public Parameters} The scheme's parameters are the password 
length $n$ (in bits), a time slot size $I$ (in seconds), representing the amount of time each password is valid for, and the maximal supported authentication period $k$ (measured
as the number of slots of size $I$). Furthermore, our scheme uses some public
cryptographic hash function ${H:\{0,1\}^{m}\rightarrow 
\{0,1\}^m}$ for an arbitrary $m\ge n + s + c$, where $s$ is the number of bits
used for the salt and $c$ is the number of bits needed to represent the time.
Typical values are given in Table~\ref{table:params}.

\begin{table}[!h]
    \caption{Scheme public parameters and their typical values.}
    \label{table:params}
    \begin{tabular}{@{}
        c
        c
        l}
        \toprule
         {Parameter} & {Value} & {Description} \\
        \midrule
         $n$ & $130$ bits & One-time password length \\
         $s$ & $80$ bits & Salt length \\
         $c$ & $32$ bits & Number of bits used for time \\
         $m$ & $256$ bits & Hash function block size \\
         $k$ & $2\times 10^6$ & Chain length \\
         $I$ & $30$ sec & Time slot length \\
        \bottomrule
    \end{tabular}%
\end{table}

\paragraph{Setup} The client chooses and stores a uniformly random secret key
$\mathit{sk} \in \{0,1\}^n$, as well as a random salt $\mathit{id}\in
\{0,1\}^s$, and notes the setup time $t_{\mathrm{init}}$ (measured in slots of
length $I$). The public hash function $H$ together with the initialization time
$t_{\mathrm{init}}$ induce the $k$ independent hash functions $h_1,\dotsc,h_k
:\{0,1\} ^n\rightarrow \{0,1\}^n$ as follows: for $1 \leq i \leq k$ define
\begin{align*}
{h_i(x) = H\big(\langle t_{\mathrm{init}} + k - i \rangle_c \ \big\| \
\mathit{id} \ \big\| \  x\big)\Big\vert_n} \,,
\end{align*}
where for a numerical value $t$, $\langle t\rangle_c$ denotes the $c$-bit binary
representation of $t$, and for strings $x,y\in\{0,1\}^*$, we write $x\vert_n$
and $x\| y$ to denote the $n$-bit prefix of $x$ and the concatenation of $x$ and
$y$, respectively.  This simple method of obtaining independent hash functions
from a single hash function over a larger domain is called \emph{domain
separation}, and it is often attributed to Leighton and
Micali~\cite{leightonmicali}. Note that since all inputs to the hash function
are of equal size, this construction is not susceptible to length extension
attacks, and therefore, there is no need to use HMAC.

The client then computes 
\begin{align*}
p_{\mathrm{init}} = h_k(h_{k- 1}( \hdots (h_{1}(\mathit{sk})) \hdots))
\end{align*}
and sends it to the server together with $\mathit{id}$. 
The server stores $p_{\mathrm{init}}$ as $p_{\mathrm{prev}}$ as well as the time
$t_{\mathrm{init}}$ as $t_{\mathrm{prev}}$ (we discuss time synchronization
issues below). 

\paragraph{Authentication} To authenticate at a later time $t \in
(t_{\mathrm{init}}, t_{\mathrm{max}}]$ (measured in units of length $I$ where
$t_{\mathrm{max}} = t_{\mathrm{init}} + k$), the client and server proceed as
follows: the client uses $\mathit{sk}$ and $t$ to generate the
one-time password 
\[ 
    p_t = h_{t_{\mathrm{max}}-t}(h_{t_{\mathrm{max}}-t-1}(\hdots
    (h_{1}(\mathit{sk}))\hdots)).
\]
Alternatively, when $t=t_{\mathrm{max}}$, we use ${p_t= \mathit{sk}}$.
To check a password $p$, the server uses the stored
values, $t_{\mathrm{prev}}$ and $p_{\mathrm{prev}}$, and the current time-based
counter value $t > t_{\mathrm{prev}}$. The server computes 
\begin{align*}
p'_{\mathrm{prev}} =
      h_{t_{\mathrm{max}}-t_{\mathrm{prev}}}(
          h_{t_{\mathrm{max}}-t_{\mathrm{prev}}-1}(\dotsc
      (h_{t_{\mathrm{max}} - t + 1}(p)) \hdots)).
\end{align*}
If $p'_{\mathrm{prev}} = p_{\mathrm{prev}}$, then authentication is successful,
and the server updates $p_{\mathrm{prev}}$ to $p$ and $t_{\mathrm{prev}}$ to
$t$. Otherwise, the server rejects the password.

\paragraph{Reinitialization}
Just as in authentication, initialization requires communication only from the 
client device to the server, and the server does not need to send anything to 
the client. The only difference is that during initialization, the client needs 
to supply the server with the salt in addition to the initial password. 
The finite length of the hash chain requires periodic reinitialization, and the 
length of this period trades off with the time step length $I$ and the time it 
takes to perform the initialization (which is dominated by the full traversal of
the hash chain by the client). For standard use cases, one can set $I=30$ 
seconds and $k=2\times 10^6$, which results in a hash chain valid for $2$ years
and takes less than $15$ seconds to initialize on a modern phone.

Since key rotation is generally recommended for security purposes (NIST, for 
example, recommends ``cryptoperiods'' of 1-2 years for private authentication 
keys~\cite{nistkeys}), we don't view periodic reinitialization as a
major limitation of our scheme. While reinitialization is obviously somewhat 
cumbersome, there are several properties of our scheme that mitigate the 
inconvenience. First, the fact that our setup is unidirectional makes it very
similar to authentication from the user's point of view. Second, from a
security standpoint, the setup is not vulnerable to passive eavesdrop
attacks, unlike TOTP schemes that rely on shared secrets.

A scenario where the hash chain expires before the user is able to
reinitialize it with the server can be handled out-of-band in a manner similar to password
recovery or loss of the second-factor. Alternatively, some implementations could
choose to accept the head of the chain even after its validity period, which would incur a
loss in security proportional to the time elapsed since expiration.

\paragraph{Clock Synchronization.} As with current TOTP schemes, authentication
requires a synchronized clock between the server and the client. Time skew, or
simply natural delay between the moment of password generation and the moment of
verification, might result in authentication failure. To prevent this, the
server may allow the provided password to be validated against several previous
time steps (relative to the server's clock), as was the case in the TOTP scheme.
When this occurs, the previous authentication timestamp $t_{\mathrm{prev}}$
stored on the server should be updated to the timestamp which resulted in
successful verification.

Figure~\ref{scheme:overview} illustrates the design of T/Key.


\section{Security}
\label{sec:security}

Although our scheme bears a resemblance to both S/Key and TOTP, it has several 
essential differences that eliminate security issues present in those schemes. 

First and foremost, T/Key does not require the server to store any 
secrets, which mitigates the risk of an attack that compromises the server's 
database, unlike TOTP, which requires the client's secret key to be stored by 
the server. 

Second, T/Key's passwords are time limited, unlike those in S/Key, 
which makes phishing attacks more difficult because the attacker has a limited 
time window in which to use the stolen password. However, the fact that T/Key's 
passwords are time limited makes it necessary for the hash chain used by T/Key 
to be significantly longer than those in S/Key, since its length must now be 
proportional to the total time of operation rather than to the supported number 
of authentications. This modification raises the issue of the dependence of 
security on the length of the hash chain. 
Hu, Jakobsson and Perrig~\cite{HJP} discuss the susceptibility of iterating 
the same hash function to ``birthday" attacks and H{\aa}stad and 
N{\"a}slund~\cite{hastad} show that if the \emph{same} hash function $h$
is used in every step of the chain, then inverting the $k$-th iterate is
actually $k$ times easier than inverting a single instance of the hash 
function. We reproduce their proof here for completeness and clarity.

We set $N=2^n$ and denote by $\mathcal{F}_N$ the uniform distribution over the
set of all functions from $[N]$ to $[N]$. For a function ${h:[N]\rightarrow
[N]}$, we let $h^{(k)}$ denote $h$ composed with itself $k$ times. For functions
$h_1,h_2,\dotsc,h_k$ and $1\le i \le j \le k$, we let $h_{[i,j]}$ denote the
composition $h_j\circ h_{j-1}\circ \dotsb \circ h_i$. When writing $A^h$, we
mean that algorithm $A$ is given oracle access to all $k$ functions
$h_1,\dotsc,h_k$. 

\begin{theorem}[\cite{hastad}] \label{thm:hastad}
    For every $N\in\mathbb N$, $k\leq \sqrt{N}$ and $2k\leq T \leq N/k$, there
    exists an algorithm $A$ that makes at most $T$ oracle queries to a random
    function $h:[N]\rightarrow[N]$ and
    \[
\Pr_{\substack{h\in\mathcal{F}_N \\ x\in[N]}}\left[h\left(A^h(h^{(k)}(x))\right)
= h^{(k)}(x)  \right] = \Omega\left(\frac{Tk}N\right)\,.
    \] Moreover, every algorithm that makes at most $T$ oracle queries succeeds
    with probability at most $O(Tk/n)$.
\end{theorem}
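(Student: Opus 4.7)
The plan for the algorithmic (achievability) direction is straightforward: on input $y = h^{(k)}(x)$, the algorithm $A$ samples a uniformly random starting point $x'\in[N]$ and computes the forward chain $C_j = h^{(j)}(x')$ for $j=1,\ldots,T-1$ using $T-1$ oracle queries; if any $C_j$ equals $y$, it outputs $C_{j-1}$, which is by construction a preimage of $y$ under $h$. Letting $S_i = h^{(i)}(x)$ so that $y = S_k$, the key observation is that the algorithm succeeds precisely when the two chains coalesce at some $(i,j)$ with $0\le i \le k-1$ and $j + (k-i) \le T-1$: once $C_j = S_i$, the two chains merge and subsequent iteration walks $C$ all the way to $y$.

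For the probability lower bound on the attack, I would count the number of ``useful'' pairs $(i,j)$ with $0\le i \le k-1$ and $j \le T-1-(k-i)$, which is $\Theta(Tk)$ under the assumption $T\ge 2k$. For random $h, x, x'$, each event $\{C_j = S_i\}$ occurs with probability $1/N$, so the expected number of useful collisions is $\mu = \Omega(Tk/N)$. Under the regime $Tk\le N$ imposed by the hypotheses, a Paley--Zygmund / second-moment argument converts this expectation into $\Pr[\text{success}] = \Omega(Tk/N)$; the variance is dominated by independent-pair contributions of $O(\mu^2)$, while coalesced pairs (those sharing an offset $j-i$) contribute only $O(\mu)$ because they correspond to a single coalescence event counted multiple times.

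For the matching lower bound on all algorithms, I would define the target set $S = \{x, h(x),\ldots, h^{(k-1)}(x)\}$ and argue via lazy sampling of the random oracle that, conditioned on no query hitting $S$, the algorithm's view is essentially independent of $x$ beyond what $y$ reveals; in this regime, outputting a valid preimage amounts to a blind guess and succeeds with probability $O(T/N)$. Meanwhile, $|S|\le k$ and $S$ stays hidden until first hit, so a union bound over the $T$ queries bounds the probability that any query lands in $S$ by $O(Tk/N)$. Combining the two regimes gives $\Pr[\text{success}] = O(Tk/N + T/N) = O(Tk/N)$.

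The main technical obstacle on the upper-bound side is the second-moment calculation, since consecutive collisions are correlated (once two chains meet, they stick together thereafter); the analysis goes through because this correlation only inflates the same-offset contribution by $O(\mu)$, which is dominated by the main term whenever $\mu\ll 1$. The main conceptual obstacle on the lower-bound side is making the hybrid rigorous: one has to carefully couple the real oracle with an idealized lazy-sampling oracle constrained to be consistent with the fixed image $y$, and argue that off-$S$ queries reveal no information about $S$ beyond what $y$ already determines. Both obstacles are standard in the random-oracle toolkit, so I expect the proof to go through cleanly.
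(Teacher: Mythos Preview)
Your achievability argument is correct but takes a different route from the paper. The paper's algorithm starts the forward iteration \emph{at $y$ itself}, setting $x_0=y$, $x_j=h(x_{j-1})$, and restarting from a fresh unvisited point whenever a cycle is detected. Its analysis is then a one-line deferred-decision argument: conditioned on none of the first $T-k$ iterates having hit the set $\{h^{(i)}(x)\}_{i=1}^{k}$, each new oracle reply is a fresh uniform sample and lands in that $k$-element set with probability essentially $k/N$, so the failure probability is at most $(1-k/N)^{T/2}=1-\Omega(Tk/N)$. No second-moment computation enters at all.

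Your random-start plus Paley--Zygmund route also works, but two of your intermediate claims need more care than you indicate. First, the assertion that $\Pr[C_j=S_i]=1/N$ is not literally true for iterated applications of the \emph{same} $h$; for instance $\Pr_{h,x,x'}[h(x')=h(x)]\approx 2/N$, and more generally the probability depends on $i$ and $j$. Second, the second-moment bookkeeping with chain-merging correlations, while doable, is genuinely more delicate than the paper's conditional-independence step. What your approach buys is a template that resembles a standard random-walk hitting-time argument and may generalize more readily; what the paper's approach buys is a two-line proof with no variance calculation.

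For the impossibility direction the paper gives no argument and simply defers to the cited reference, so there is nothing here to compare your lazy-sampling sketch against.
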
 
\begin{proof}
    We prove the first part of the theorem (the existence of a ``good''
    algorithm) and refer the reader to~\cite{hastad} for the proof of the second
    part. Consider the following algorithm: On input $h^{(k)}(x) = y\in[N]$, the algorithm
    sets $x_0=y$ and then computes $x_j=h(x_{j-1})$ until either $x_j=y$, in
    which case it outputs $x_{j-1}$, or until $x_j=x_i$ for some $i<j$. In the
    latter case, it picks a new random $x_j$ from the set of all points it
    hasn't seen before and continues. If the algorithm makes $T$ queries to $h$
    without finding a preimage, it aborts. 

    To analyze the success probability of this algorithm, consider the first
    $(T-k)$ points $\{x_j\}_{j=1}^{T-k}$. If any of these points collides with
    any of the values along the hash chain $\{h^{(i)}(x)\}_{i=1}^k$, the algorithm
    will output a preimage of $y$ after at most $k$ additional queries.
    Therefore, the probability of failure is at most the probability of not
    colliding with the hash chain during the first $T-k$ queries. But as long as
    a collision does not happen, each query reply is independent of all previous
    replies and of the values $\{h^{(i)}(x)\}_{i=1}^k$. Each query therefore
    collides with the chain with probability at most $k/N$, and overall, the
    algorithm fails with probability at most $(1-k/N)^{T-k} \leq  (1-k/N)^{T/2}
    = 1-\Omega\left(Tk/N\right)$.
\end{proof} 

This loss of a multiplicative factor of $k$ in security is undesirable as it
forces us to increase the security parameters for the hash function to resist
long-running adversaries. A standard solution is to use a different hash
function at every step in the chain.  The question then is the following:  if
$H$ is the composition of $k$ {\em random} hash functions, namely
\[  H(x) \deq h_k(h_{k-1}(\cdots(h_2(h_1(x))) \cdots ))\,,  \] how difficult is
it
to invert $H$ given $H(x)$ for a random $x$ in the domain? To the best of our
knowledge, this aspect of hash chain security has not been analyzed previously.

In Section~\ref{sec:online} we prove a time lower bound for inverting a
hash chain composed of independent hash functions. We show that as opposed to
the case in Theorem~\ref{thm:hastad}, where the same function is used throughout
the chain, resulting in a loss of security by a factor of $O(k)$, using independent
hash function results in a loss of only a factor of $2$. Thus for most practical
applications, a hash chain is as hard to invert as a single hash function. 
In Section~\ref{sec:preprocessing}, we prove a time-space lower bound for
inverters that can preprocess the hash function. 

\subsection{A lower bound for inverting hash chains}
\label{sec:online}

\begin{theorem}[Security of hash chains against online attacks]
    \label{thm:onlineinvert}
    Let functions ${h_1,\dotsc,h_k \in [N]\rightarrow [N]}$ be chosen 
    independently and uniformly at random. Let $A$ be an algorithm that 
    gets oracle access to each of the functions $\{h_i\}_{i=1}^k$ and makes at 
    most 
    $T$ oracle queries overall. Then, 
    \begin{align*}
    \P_{\substack{h_1,\dotsc,h_k \in \mathcal F_N\\ x_0 \in [N]}}
    \left[h_k\left(A(h_{[1,k]}
    (x_0))\right)  = h_{[1,k]}(x_0)\right] \le \frac {2T+3}{N}.
    \end{align*}
\end{theorem}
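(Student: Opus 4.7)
The plan is to argue, via a hybrid experiment, that the chain structure does not help the adversary much beyond inverting a single random oracle. First, I would introduce a modified experiment in which the challenge is replaced by a fresh uniform value $\tilde y\in[N]$, sampled independently of the chain and of all the $h_i$. In this modified experiment, $h_k$ is, from the adversary's perspective, a pure random oracle independent of $\tilde y$, so the task reduces to the standard ``given uniform $\tilde y$, find a preimage under $h_k$'' problem. Each of the at most $T$ oracle queries to $h_k$ (and the output itself, viewed as one further attempt) matches $\tilde y$ with probability exactly $1/N$, so the success probability in the modified experiment is at most $(T+1)/N$.

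Second, I would couple the real and modified experiments so that $h_k$ agrees on every input except possibly $z_{k-1} \deq h_{[1,k-1]}(x_0)$, the ``true'' preimage of $y$ in the real game. Under this coupling, the two experiments produce identical transcripts unless the adversary actually queries $h_k$ at $z_{k-1}$, and therefore
\[
\Pr[\text{win in real}] \le \Pr[\text{win in modified}] + \Pr[\text{adversary queries }h_k\text{ at }z_{k-1}].
\]

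The main technical step is bounding the probability of querying $h_k$ at $z_{k-1}$, which I would do entirely inside the modified experiment (where it is easier to reason about, since $\tilde y$ is independent of the chain). The argument proceeds by induction on the query index: prior to any ``chain-hit'' (i.e., a query to some $h_i$ at $z_{i-1}$), the adversary's view consists only of $\tilde y$ and oracle answers that are drawn uniformly and independently of $z_0,\dotsc,z_{k-1}$. Hence each adaptive query to $h_i$ lands on the uniform, view-independent point $z_{i-1}$ with probability at most $1/N$, and a union bound over the $T$ queries shows that any chain-hit---in particular a query to $h_k$ at $z_{k-1}$---occurs with probability at most $T/N$.

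Combining these three ingredients gives the theorem, up to the precise constant in the numerator. The delicate step, and the one I would expect to require the most care, is the chain-hit analysis: once the adversary learns a single chain point it can trace the rest of the chain for free, so a naive union-bound that ignores adaptivity would be circular. The trick is to account only for the \emph{first} chain hit, exploiting the fact that before the first hit every oracle answer is still information-theoretically independent of the remaining unexposed chain points, so the standard ``lazy-sampling'' intuition applies cleanly. The small gap between the bound this yields and the stated $(2T+3)/N$ is absorbed by edge cases such as the adversary correctly guessing $z_{k-1}$ without querying it.
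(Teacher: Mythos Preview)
Your proposal is correct and yields essentially the same bound (in fact $(2T+2)/N$ once the edge cases are accounted for), but it is organized quite differently from the paper's argument. The paper does not use a hybrid at all: it works directly in the real experiment, appends the output as one extra $h_k$-query, and observes that success forces some query \emph{response} to land on the chain (i.e., $y_j=w_{i_j}$). It then bounds $\Pr[\text{no such collision}]$ as a product over queries, splitting each factor into ``the query itself is the chain point'' (probability $\le 1/(N-j+1)$) and ``the fresh response coincidentally equals the chain point'' (probability $\le 1/N$); the product telescopes to give $(2T+3)/N$. Your decomposition instead separates the two failure modes globally: the ``fresh-response coincidence'' at level $k$ is isolated by the hybrid with an independent challenge and handled as a single-function preimage bound, while the ``query lands on the chain'' part becomes your first-chain-hit analysis. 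Each approach has its virtues: the paper's telescoping product is short and self-contained, whereas your hybrid cleanly exposes that the chain structure contributes at most an additive $T/N$ on top of the ordinary preimage bound, and reuses standard lazy-sampling/indistinguishability machinery. One small sharpening of your write-up: the sentence ``each adaptive query lands on $z_{i-1}$ with probability at most $1/N$'' is literally true only for $\Pr[\text{first hit at query }j]$, not for the conditional $\Pr[\text{hit at }j \mid \text{no earlier hit}]$ (which can be $1/(N-j+1)$); the clean justification is to couple with a ``virtual'' run in which all answers are fresh and the chain $(z_0,\dots,z_k)$ is sampled independently afterward, so that the $j$th virtual query hits $z_{i_j-1}$ with probability exactly $1/N$ and the real first-hit event is contained in the virtual hit event.
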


\begin{proof}
    
    Let $W=(w_0,w_1,\dotsc,w_k)$ be the sequence of values of the hash chain, 
    i.e., $w_0=x_0$ and $w_i=h_i(w_{i-1})$ for $i\in[1,k]$.
    Let $A$ be an adversary that makes at most $T$ oracle queries.  
    Denote by $q_j=(i_j,x_j,y_j)$ the $j$-th query 
    made by $A$, where $i_j$ is the index of the oracle queried, $x_j$ is the 
    input queried, and $y_j$ is the oracle's response. We say that a query 
    $q_j$ \emph{collides} with $W$ if $y_j 
    =  w_{i_j}$, namely the reply to the query is a point on the hash chain.
    At the cost of one additional query, we modify $A$ to query $h_k$ on its
    output before returning it. Thus, we can assume that if $A$ successfully
    find a preimage, at least one of its $T+1$ queries collides with $W$.

    Let $R=\{(i,x,y) : h_i(x)=y\}$ be the set of all random oracle queries and 
    their 
    answers. Using the principle of deferred decision, we can construct the set 
    $R$ 
    incrementally as follows. Initially $R=\emptyset$; 
    subsequently whenever $A$ makes an oracle query of the form $(i,x)$, if 
    $x=w_{i-1}$, we respond with $y=w_i$ and add $(i, w_{i-1}, w_i)$ to $R$. 
    Else if $(i,x,y)\in R$, we reply with  $y$. 
    Otherwise, we choose $y$ uniformly at random from $[N]$, add 
    $(i,x,y)$ to 
    $R$, and reply with $y$. 
    
    As mentioned above, to invert the hash chain, at least one query $q_j\in R$ 
    must 
    collide with $W$. It follows that
    \begin{align*}
    \Pr_{H, x_0}\left[A \text{ loses}\right] = 
    &\Pr_{H, x_0}\left[\bigwedge_{j=1}^{T+1} y_j\neq w_{i_j}\right] \\
    = &\prod_{j=1}^{T+1}\Pr_{H, x_0}\left[y_j\neq w_{i_j} \middle | 
    \bigwedge_{\ell=1}^{j-1}y_\ell\neq w_{i_\ell}\right]. 
    \end{align*}
    
    To bound each term inside the product, we use the basic fact that
    \begin{align*}
    \P(A|C) &= \P(A|B,C)\P(B|C) + \P(A|\neg B,C)\P(\neg B|C) \\
    &\le \P(A|B,C) + \P(\neg B|C)
    \end{align*}
    to obtain
    \begin{align*}
    &\Pr_{H, x_0}\left[y_j = w_{i_j} \middle | 
    \bigwedge_{\ell=1}^{j-1}y_\ell\neq w_{i_\ell}\right] \\
    & \le \Pr_{H, x_0}\left[y_j = w_{i_j} \middle | x_j\neq w_{i_j-1} \wedge
    \bigwedge_{\ell=1}^{j-1}y_\ell\neq w_{i_\ell}\right] \\
    &+ \Pr_{H, x_0}\left[x_j = w_{i_j-1} \middle |
    \bigwedge_{\ell=1}^{j-1}y_\ell\neq w_{i_\ell}\right].
    \end{align*}
    Notice that the first of the two events in the last sum can only occur if
    $x_j$ does not appear in $R$. Otherwise, $y_j\neq w_{i_j}$ due to the fact
    that none of the previous queries collided with $W$. Therefore, the reply
    $y_j$ is sampled uniformly at random, and this term is at most $\frac 1N$.
    
    To bound the second term, note that each previous reply $y_\ell$, provided 
    that 
    it does not collide with $W$, rules out at most one possible value for 
    $w_{i_j-1}$: either $x_\ell$  if $i_\ell = i_j$, or $y_\ell$ if $i_\ell = 
    i_j-1$. Therefore, $w_{i_j-1}$ is distributed uniformly over the remaining 
    values, of which there are at most $N-(j-1)$. Specifically $w_{i_j-1}$ is 
    equal 
    to $x_j$, which is a function of all the previous replies  $y_1,
    \dotsc,y_{j-1}$,
    with probability at most $\frac 1{N-j+1}$.
    
    Overall, 
    \begin{align*}
    \Pr_{H, x_0}\left[A \text{ loses}\right] &\ge \prod_{j=1}^{T+1}
    \left( 1-  \frac 1N - \frac 1{N-j+1}\right) \\
    &\ge \prod_{j=1}^{T+1}
    \left( 1-  \frac 2{N-j+1}\right). \\
    \end{align*}
    
    We note that this is a telescopic product, which simplifies to 
    \[
    \frac{(N-T-2)(N-T-1)}{N(N-1)} \geq \frac{N^2 - (2T + 3)N}{N^2}
    \]
    and therefore,
    
    \[
    \Pr_{H, x_0}\left[A \text{ wins}\right] \leq \frac{2T+3}{N}.
    \]
\end{proof}

Theorem~\ref{thm:onlineinvert} establishes the difficulty of finding a preimage
of the last iterate of the hash chain. For T/Key, we also need to bound the
success probability of attacks that ``guess'' a preimage of the entire chain.
\begin{corollary}\label{cor:onlineall}
    Let functions ${h_1,\dotsc,h_k \in [N]\rightarrow [N]}$ be chosen 
    independently and uniformly at random. Let $A$ be an algorithm that 
    gets oracle access to each of the functions $\{h_i\}_{i=1}^k$ and makes at 
    most 
    $T$ oracle queries overall. Then, 
    \begin{align*}
    \P_{\substack{h_1,\dotsc,h_k \in \mathcal F_N\\ x_0 \in [N]}}
    \left[h_{[1,k]}\left(A(h_{[1,k]}
    (x_0))\right)  = h_{[1,k]}(x_0)\right] \le \frac {2T+2k+1}{N}.
    \end{align*}
\end{corollary}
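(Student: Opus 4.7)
The plan is to reduce to the proof of Theorem~\ref{thm:onlineinvert} rather than to invoke it as a black box. Given an adversary $A$ that makes at most $T$ queries and outputs some $z$ with $h_{[1,k]}(z) = w_k$, I will construct a modified adversary $A'$ that runs $A$ to obtain $z$ and then explicitly evaluates the chain $h_1(z),\ h_2(h_1(z)),\ \dots,\ h_k(h_{[1,k-1]}(z))$ step by step, making $k$ additional queries. Thus $A'$ makes at most $T+k$ queries in total.

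The key observation is that if $A$ succeeds, the last query made by $A'$ in this evaluation is to $h_k$ with reply $h_{[1,k]}(z)=w_k$, which is by definition a collision with the hash chain $W=(w_0,\dotsc,w_k)$. Consequently, if $A$ succeeds, then at least one of the $T+k$ queries made by $A'$ collides with $W$. This is the analogue of the ``forced final query'' trick used in Theorem~\ref{thm:onlineinvert}, except that here we need a chain of $k$ extra queries instead of a single one, because the success condition asks for a preimage of the entire chain rather than of $h_k$ alone.

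Having reduced to a statement purely about the probability that some query of $A'$ collides with $W$, I will reuse verbatim the collision-probability analysis from the proof of Theorem~\ref{thm:onlineinvert}. The same deferred-decision sampling of the oracle responses, the same decomposition $\P(y_j=w_{i_j}\mid \text{no earlier collision})\le \frac{1}{N}+\frac{1}{N-j+1}$, and the same telescoping product apply, now taken over $j=1,\dotsc,T+k$ rather than $j=1,\dotsc,T+1$. This yields
\begin{align*}
\P[\text{no collision}] \ge \frac{(N-(T+k)-1)(N-(T+k))}{N(N-1)},
\end{align*}
so that the winning probability is bounded by $\frac{2(T+k)+1}{N}=\frac{2T+2k+1}{N}$, matching the claim.

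There is no substantive obstacle beyond bookkeeping: one must be careful that the $k$ extra queries used to compute the chain from $z$ are counted against the query budget of $A'$, and that the telescoping argument of Theorem~\ref{thm:onlineinvert} is invoked with parameter $T+k$ rather than $T+1$. Everything else, including the legitimacy of the deferred-decision sampling for the oracle on the augmented query sequence, transfers without modification.
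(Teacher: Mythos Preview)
Your proposal is correct and takes essentially the same approach as the paper. The only cosmetic difference is that the paper invokes Theorem~\ref{thm:onlineinvert} as a black box---constructing $A'$ that outputs $h_{[1,k-1]}(z)$ using $T'=T+k-1$ queries and then applying the bound $(2T'+3)/N$---whereas you inline the collision analysis directly over $T+k$ queries; the arithmetic and the underlying reduction are identical.
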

\begin{proof}
    Let $A$ be an algorithm as in the statement of the corollary. We use it to
    construct an algorithm $A'$ that finds a preimage of the last iterate of the
    hash chain (as in the statement of Theorem~\ref{thm:onlineinvert}). On input
    $y$, algorithm $A'$ runs algorithm $A$ to get a point $z$ and then computes
    and outputs $z'=h_{[1,k-1]}(z)$. If $h_{[1,k]}(z)=h_{[1,k]}(x)$, then
    $h_k(z')=h_{[1,k]}(x)$. Moreover, algorithm $A'$ makes at most $T'=T+k-1$
    queries to its oracles. Therefore by Theorem~\ref{thm:onlineinvert}, its
    success probability is at most $(2T'+ 3)/N=(2T+2k+1)/N$.
\end{proof}

\paragraph{Optimality.} 
One might ask whether the above lower bound is tight. Perhaps composing $k$
independent hash functions not only avoids some of the problems associated with
using a hash chain derived by composing the same hash function, but actually
results in a function that is $k$ times more difficult to invert than the basic
hash function. Ideally, one might have hoped that the probability of inverting
the hash chain in $T$ queries would be at most $O\left(\frac{T}{kN}\right)$. 

However, this is not the case, because every iteration of the hash chain
introduces additional collisions and shrinks the domain of the function at a
rate of $1/k$, where $k=o(N)$ is the length of the chain (see
Lemma~\ref{lem:shrinkage} for a proof sketch). An attacker can use these
collisions to her advantage. Consider an attack that evaluates the chain on
$T/k$ random points in its domain (at a total cost of $T$ hash computations).
Lemma~\ref{lem:preimages} shows that a point in the image of the hash chain has
$k$ preimages in expectation. Therefore, each of the $T/k$ randomly chosen
points collides with the input under the chain with probability $k/N$, and so the overall success probability
of the attack is roughly $T/N$.

\subsection{Security of T/Key}

Our threat model assumes the adversary can repeatedly gain access to the server
and obtain all information needed to verify the password. The adversary can also
obtain multiple valid passwords at times of his choice. Finally, we allow the
adversary to choose the time when he makes his impersonation attempt. To
mitigate preprocessing attacks, we salt all our hash functions (in
Section~\ref{sec:preprocessing}, we discuss preprocessing attacks in more
detail, including the extent to which salting helps prevent them).

\paragraph{Non-threats.}
First, we assume that there is no malware on the phone or on the
user's laptop.  Otherwise, the user's session can be hijacked by the
malware, and strong authentication is of little value.
Second, because the channel between the laptop and the
authentication server is protected by TLS, we assume there is no
man-in-the-middle on this channel.
Third, all TOTP schemes are susceptible to an {\em online} phishing
attack where the attacker fools the user into revealing her
short-lived one-time password to a phishing site, and the
attacker then immediately authenticates as the user, within the allowable
short window.  This is a consequence of the requirement for {\em one-way}
communication with the authentication token (the phone).  Note however
that the limited time window makes the exploitation of credentials 
time-sensitive, which makes the attack more complicated.

We begin by presenting a formal definition of security. Our definitions are 
based on standard definitions of identification protocols (see, for example,
\cite{Shoup_1999}).

\begin{definition}[Time-based One-Time Password Protocol]
A one-time password protocol is a tuple $\mathcal I=(\texttt{pp}, 
\texttt{keygen}, P, V)$ where
\begin{itemize}
    \item Public parameter generator $\texttt{pp}(1^\lambda, k) \rightarrow n$ 
    is a polynomial time algorithm that takes as input the security parameter 
    in unary along with the maximal supported 
    authentication period $k$ and outputs the password length $n$.
   \item Key generator $\texttt{keygen}(n,k) \rightarrow
   (\mathit{sk},\mathit{vst})$ is a probabilistic polynomial time algorithm that
   takes as input the parameters, $n$ and $k$, and outputs the prover's secret
   key $\mathit{sk}$ and the initial verifier state $\mathit{vst} $.
    \item Prover $\texttt{P}\,(\mathit{sk},t)\rightarrow p_t$ is a polynomial
    time algorithm, which takes as input the prover's secret key $\mathit{sk}$,
    and a time $t \in [1,k]$, and outputs a one-time password $p$.
    \item Verifier $\texttt{V}\,(\mathit{vst},p,t)\rightarrow
    (\texttt{accept/reject},\mathit{vst}')$ is a polynomial time algorithm,
    which takes as input the previous state $\mathit{vst}$, a password $p$, and
    time $t\in [1,k]$ and outputs whether the password is accepted and the
    updated verifier state $\mathit{vst}'$.
\end{itemize}
For correctness, we require that when executed on monotonically increasing 
values of $t$ with
the state $\mathit{vst}$ properly maintained as described above, the verifier 
$\texttt{V}\,(\mathit{vst}, 
\texttt{P}\,(\mathit{sk},t),t)$ always outputs $\texttt{accept}$.
\end{definition}

We now proceed to define the security game, where we use the random oracle 
model~\cite{randomoracle}.

\begin{attackgame}
    \label{def:game}
Let $\mathcal I$ be a time-based one-time password protocol, and let $\mathcal O
$ be a random oracle. Given a challenger and an adversary 
$A$, the attack game runs as follows:
\begin{itemize}
    \item \emph{Public Parameter Generation} -- The challenger generates \\ $n
    \leftarrow \texttt{pp}(1^\lambda,k)$.
    \item \emph{Key Generation Phase} -- The challenger generates \\
    ${(vk,\mathit{sk}) \leftarrow
    \texttt{keygen}^{\mathcal O}(n,k)}$, given access to the random oracle.
    \item \emph{Query Phase} -- The adversary runs the algorithm 
    $A$, which is given the verifier's initial state $\mathit{vst}$ as well as 
    the ability to issue the following types of (possibly adaptive)  queries:
    \begin{itemize}
        \item Password Queries: The adversary sends the challenger a time value
        $t$. \\ The challenger generates the password $p\leftarrow
        \texttt{P}\,^\mathcal{O}(t,\mathit{sk})$, feeds it to the verifier to
        obtain $(\texttt{accept}, \mathit{vst}')\leftarrow
        \texttt{V}\,^{\mathcal O}(t,\mathit{vst},p)$, updates the stored
        verifier state to $\mathit{vst}'$,  and sends $p$ to the adversary.
        \item Random Oracle Queries: The adversary sends the challenger a point
        $x$, and the challenger replies with $\mathcal O(x)$.
    \end{itemize}
    The above queries can be adaptive, and the only restriction is that the 
    values of $t$ for the password queries must be monotonically increasing.
    \item \emph{Impersonation attempt} -- The adversary submits an 
    identification attempt $(t_{\mathrm{attack}},p_{\mathrm{attack}})$, such
    that $t_{\mathrm{attack}}$ is greater than  all previously queried password
    values.  
\end{itemize}
We say that the adversary $A$ wins the game if $\,\texttt{V}^{\,\mathcal O}
(vst,p_{\mathrm{attack}},t_{\mathrm{attack}})$ 
outputs \texttt{accept}. 
We let $\textbf{Adv}_A(\lambda)$ denote the probability of the adversary
winning the game with security parameter $\lambda$, where the probability is 
taken over the random oracle as well as the randomness in the key 
generation phase. 
\end{attackgame}

We are now ready to prove that T/Key is secure. 
Specifically, given an adversary 
that makes at most $T$ queries, we establish an upper bound on the advantage the 
adversary can have in breaking the scheme. 
We note that no such result was previously known for the original S/Key 
scheme, and the key ingredient in our proof is Theorem~\ref{thm:onlineinvert}.

\begin{theorem}[Security of T/Key]
    \label{thm:security}
    Consider the T/Key scheme with password length $n$  and maximum 
    authentication period $k$.  Let $A$ 
    be an adversary attacking the scheme that makes at most $T$ random oracle 
    queries. Then, 
    \[
    \textbf{Adv}_A \le \frac {2T+2k+1}{2^n}.
    \]
\end{theorem}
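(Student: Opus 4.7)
My plan is to adapt the lazy-sampling argument of Theorem~\ref{thm:onlineinvert} to the T/Key attack game. First, by the random choice of the salt $\mathit{id}$ together with the domain-separation construction, the $k$ maps $h_1,\dotsc,h_k:[N]\to[N]$ induced from $\mathcal{O}$ (with $N=2^n$) are, in the adversary's view, independent uniformly random functions; queries to $\mathcal{O}$ whose prefix does not match any valid pair $(t_{\mathrm{init}}+k-i,\,\mathit{id})$ are independent of the hash chain and can be ignored. Setting $w_0=\mathit{sk}$ and $w_i=h_i(w_{i-1})$, the induced chain $W=(w_0,\dotsc,w_k)$ satisfies $p_{\mathrm{init}}=w_k$.

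Next I reformulate the winning condition in terms of $W$. Let $t_{\mathrm{prev}}^{*}$ be the value of $t_{\mathrm{prev}}$ held by the verifier at the moment of the impersonation attempt, and set $a^{*}=t_{\mathrm{max}}-t_{\mathrm{attack}}+1$ and $b^{*}=t_{\mathrm{max}}-t_{\mathrm{prev}}^{*}$. The adversary wins precisely when $h_{[a^{*},b^{*}]}(p_{\mathrm{attack}})=w_{b^{*}}$, a check the verifier performs using at most $b^{*}-a^{*}+1\le k$ oracle queries whose final reply must return $w_{b^{*}}$---a collision with $W$ in exactly the sense used in the proof of Theorem~\ref{thm:onlineinvert}. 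I then run that proof's deferred-decision argument over the full execution. The challenger's queries made while computing each password response walk honestly along $W$ (they have the form $(i,w_{i-1})$ and return $w_i$ by construction), so in the lazy-sampling view they merely fix the corresponding entries of $W$ in the oracle table and never provide a lucky-collision opportunity. Hence the only queries that can produce a new collision with $W$ are the adversary's at most $T$ oracle queries together with the at most $k$ verifier queries from attack verification, for $T+k$ candidate queries in total.

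Applying the same per-query bound $\Pr[y_j=w_{i_j}\mid\text{no prior collision}]\le \tfrac{1}{N}+\tfrac{1}{N-j+1}$ as in Theorem~\ref{thm:onlineinvert}, and telescoping over $T+k$ factors exactly as there, gives
\[
\textbf{Adv}_A \;\le\; 1-\prod_{j=1}^{T+k}\!\left(1-\frac{2}{N-j+1}\right) \;\le\; \frac{2T+2k+1}{2^n},
\]
matching the claim.

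The main obstacle is justifying rigorously that password queries do not enlarge the pool of lucky-collision candidates, since they both reveal several chain values $w_i$ (for $i\ge b^{*}$) to the adversary and pre-populate many more entries of the lazy oracle table. I would handle this using the game's monotonicity constraint $t_{\mathrm{prev}}^{*}<t_{\mathrm{attack}}$, which guarantees that every revealed value lies strictly downstream of $w_{a^{*}-1}$, so $w_{a^{*}-1}$ itself is never directly exposed; a short Bayes-rule computation then shows that revealing $w_\ell$ for $\ell\ge b^{*}$ does not alter the marginal distribution of $w_{i-1}$ for $i\le b^{*}$ (since $h_\ell$ is an independent random function). Combined with the standard accounting that each prior adversary query rules out at most one candidate for $w_{i_j-1}$, this preserves the bound $\Pr[x_j=w_{i_j-1}]\le 1/(N-j+1)$ throughout, and the telescoping carries through without loss.
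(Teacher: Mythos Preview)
Your approach is correct and is essentially the same as the paper's: both reduce the game to the lazy-sampling analysis of Theorem~\ref{thm:onlineinvert} with an additional $k$ oracle queries beyond the adversary's~$T$. The only cosmetic difference is packaging---the paper first isolates this step as Corollary~\ref{cor:onlineall} (appending $k-1$ queries to the adversary to evaluate $h_{[1,k-1]}$ on its output) and then invokes it as a black box, whereas you attribute the same extra queries to the verifier's check and re-run the telescoping directly; your more explicit handling of password queries (arguing that revealed downstream values $w_\ell$ for $\ell\ge b^*$ do not affect the conditional distribution of upstream chain points) fills in exactly what the paper's ``without loss of generality'' clause glosses over.
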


\begin{proof}
    First, recall that our scheme uses a hash function $H:\{0,1\}^m
\rightarrow \{0,1\}^m$ to get $k$ functions $h_1, \dots,h_k : \{0,1\}^n
\rightarrow \{0,1\}^n$, where  ${h_i(x) = H(t_{\mathrm{init}} + k - i
\|\mathit{id} \|  x)
    \vert_n}$. In the random oracle model, we instantiate $H$ using the 
    random oracle, and so the resulting $k$ functions, $h_1, \hdots, h_k$, 
    are random and independent.  
        
    Without loss of generality, we assume that $t_{\mathrm{init}}=0$ and that
the latest password requested by the adversary is the top of the chain $p_k$
(since the functions $h_1,\dotsc,h_k$ are independent, any random oracle or
password queries corresponding to times earlier than the latest requested
password do not help the adversary to invert the remaining segment of the
chain).

By definition, the verifier accepts $(t_{\mathrm{attack}},p_{\mathrm{attack}})$
if and only if $h_{[k-t_{\mathrm{attack}}+1,k]}(p_{\mathrm{attack}}) =
h_{[1,k]}(\mathit{sk})$. Therefore, if the adversary wins the game, it must hold
that at least one query $q_j\in R$ collides with $W$. 
The proof then follows from Corollary~\ref{cor:onlineall}.
\end{proof}

\paragraph{Concrete Security} With this result at hand, we compute the 
password length required to make T/Key secure. For moderate values of $k$ (say, 
negligible in $2^n$), to make our scheme as secure as a $\lambda$-bit random 
function, it is enough to set $n=\lambda+2$, since then, assuming $k<T$,
\[
    \textbf{Adv}_A  \le \frac{2T+2k+1}{2^n} \le \frac{4T}{2^{\lambda+2}} = 
    \frac T{2^\lambda}\,.
\]  
For standard 128-bit security, we require passwords of length 130 bits. 

\section{Checkpointing for Efficient Hash Chain Traversal}

In our scheme, the client stores the secret $\mathit{sk}$, which is used as the
head of the hash chain. In the password generation phase, as described in
Section~\ref{sec:description}, the client must compute the value of the node
corresponding to the authentication time each time it wishes to authenticate. A
naive implementation would simply traverse the hash chain from the head of the
chain all the way to the appropriate node. Since T/Key uses long hash chains,
this approach could lead to undesirable latency for password generation. To
decrease the number of hashes necessary to generate passwords, the client can 
store several values (called ``pebbles") corresponding to various points in 
the chain.

There exist multiple techniques for efficient hash chain traversal using dynamic
helper pointers that achieve $O(\log n)$ computation cost per chain link with
$O(\log n)$ cells of storage~\cite{jakobsson,cj}. However, there are two key
differences between the goals of those schemes and our requirements. 
\begin{enumerate}
\item These techniques 
all assume sequential evaluation of the hash chain, whereas in our scheme, 
authentication attempts are likely to result in an access-pattern containing
arbitrary gaps. 
\item Previous schemes aim to minimize the overall time needed to take a single
step along the hash chain, which consists of two parts: the time needed to fetch
the required value in the hash chain, and the time needed to reposition the
checkpoints in preparation for fetching the future values. In our setting,
however, it makes sense to minimize only the time needed to fetch the required
hash value, potentially at the cost of increasing the time needed to reposition
the checkpoints. This is reasonable since the gaps between a user's
authentication attempts provide ample time to reposition the checkpoints, and it
is the time to generate a password that is actually noticeable to the user.
\end{enumerate}

If the user's login behavior is completely unpredictable, we can minimize the
worst-case password generation time by placing the checkpoints at equal
distances from one another. We call this the \emph{na\"ive} checkpointing
scheme. However, in many real-world scenarios, user logins follow some pattern
that can be exploited to improve upon the na\"ive scheme.

To model a user's login behavior, we consider a probability distribution that
represents the probability that the user will next authenticate at time $t$
(measured in units of time slots) given that it last authenticated at time $0$.
Additionally, we let each node in the hash chain be indexed by its distance from
the tail of the chain and let $\ell$ be the index of the head of the chain
(i.e., $\ell$ is the length of the remaining part of the hash chain). In this
model, valid future login times are the integers $\{1, 2, \hdots, \ell\}$, and
each node in the hash chain is indexed by the corresponding login time. By this,
we mean that the valid password at time $t$ is the value at node $t$. This
notation is illustrated in Fig.~\ref{chainfig}.


\begin{figure}[!h]
\centering
\resizebox{8cm}{!}{%
\begin{tikzpicture}
[place/.style={circle,draw,fill=blue!20,thick,
                 inner sep=0pt,minimum size=11mm}]
\node at (0,0)  [place]  (label0) [label=below:Tail] {0};
\node at (2,0)  [place]  (label1) {1};
\phantom{\node at (4,0)  [place] (label2) {};}
\phantom{\node at (4,0) [place] (label3) {};}
\node at (6,0) [place] (labelkm) {$\ell-1$};
\node at (8,0) [place] (labelk) [label=below:Head] {$\ell$};

\draw (4,0) node [align=center] {$\cdots$};

\draw[->,>=stealth',semithick] (label1) -- (label0) ;
\draw[->,>=stealth',semithick] (labelk) -- (labelkm) ;
\draw[->,>=stealth',semithick] (labelkm) -- (label3) ;
\draw[->,>=stealth',semithick] (label2) -- (label1) ;

\coordinate[below=1cm of label0] (t0);
\coordinate[below=1cm of labelk] (tk);
\draw[dashed,->,>=stealth',semithick] (t0) -- (tk) node[midway,above]{Time};

\end{tikzpicture}
}

\caption{The hash chain with time-labeled nodes.} \label{chainfig}
\end{figure}

The problem is then to determine where to place $q$ checkpoints, $0 \leq c_1
\leq c_2 \leq \hdots \leq c_q < \ell$, in order to minimize the expected
computation cost of generating a password.  We note that if the client
authenticates at time $t$ and $c_i$ is the closest checkpoint to $t$ with $c_i
\geq t$, then the computational cost of generating the password is $c_i - t$. If
no such checkpoint exists, then the cost is $\ell - t$. We do not take into account
the number of additional hash computations required to reposition the
checkpoints after generating a password. 

In order to make the analysis simpler, we relax the model from a ``discrete''
notion of a hash chain to a ``continuous'' one. By this, we mean that we make
the probability distribution modeling the client's next login time continuous
and allow the checkpoints to be stored at any real index in the continuous
interval $(0,\ell]$. Additionally, we allow authentications to occur at any real
time in $(0,\ell]$. Formally, let $p(t)$ be the probability density function (pdf)
of this distribution with support over the positive reals and let $F(t) =
\int_0^{t} p(t) dt$ be its cumulative distribution function (cdf). We can then
express the computational cost $C$ in terms of the checkpoints by the formula
\begin{align*}
C &= \int_0^{c_1} (c_1 - t) p(t)dt + \int_{c_1}^{c_2} (c_2 - t)p(t)dt + 
\hdots + \int_{c_q}^\ell (\ell - t)p(t)dt \\
&= c_1 F(c_1) + c_2 (F(c_2) - F(c_1)) + \hdots + \ell(F(\ell) - F(c_q)) -
\int_0^\ell t p(t) dt. \\
\end{align*}
In order to determine the values of the $c_i$'s that minimize $C$, we take the
partial derivatives $\frac{\partial C}{\partial c_i}$ for each variable and set
them equal to $0$. This gives the following system of equations:
\begin{align} \label{eqn:full_first}
\frac{F(c_1)}{p(c_1)} &= c_2 - c_1 \\
\frac{F(c_2) - F(c_1)}{p(c_2)} &= c_3 - c_2 \\
\vdots \\
\label{eqn:full_last}
\frac{F(c_q) - F(c_{q-1})}{p(c_q)} &= \ell - c_q\,.
\end{align}
Solving these equations yields the values of the $c_i$'s that minimize $C$,
which we then round to the nearest integer, since checkpoints can only be placed
at integer coordinates. We refer to this as the \emph{expectation-optimal}
solution.

Depending on the specific distribution, this system of equations may or may not
be numerically solvable. If necessary, one can simplify the problem by replacing
the set of dependent multivariate equations with a set of independent
univariate equations. This is done using the following recursive approach.
We first place a single checkpoint $c$ optimally in $[0,\ell]$, then place optimal
checkpoints in the subintervals $[0,c]$ and $[c,\ell$], and then place checkpoints
in the next set of subintervals, etc. The problem then reduces to the problem of
placing a single checkpoint in an interval $[a,b]$, and the optimal location $x$
can be determined by solving the equation
\begin{equation} \label{eqn:recursive}
\frac{F(x) - F(a)}{p(x)} = b - x\,.
\end{equation}

In practice, mobile second-factor devices are often not the best environment for
running numerical solvers. One solution would be to precompute the
expectation-optimal checkpoint positions for some fixed length $\ell$ (e.g., the
initial length of the chain) and distribution $F$ and then hardcode those values
into the second-factor application. However, as time progresses, these
precomputed positions will no longer be expectation-optimal for the the length
of the \emph{remaining} part of the hash chain. Moreover, one might want to
adaptively reposition the checkpoints based on the past average time between
logins of the user. 

\paragraph{Repositioning the checkpoints} Each time a password is generated, we
reposition the checkpoints by computing the optimal checkpoint positions for the
length of the remaining chain. We then compute the hash values at these
positions by traversing the hash chain from the nearest existing checkpoint.
This is done in the background after presenting the user with the generated
password.  

\subsection{User logins as a Poisson process}

One choice for modeling the distribution $F(t)$ between logins is the
exponential distribution 
\begin{align*}
p(t) = \lambda e^{-\lambda t} \quad \quad F(t)=  1- e^{-\lambda t}\,. 
\end{align*}
The exponential distribution is a distribution of the time between events in a
Poisson process, i.e. a process in which events occur continuously and
independently at a constant average rate. Previous works state that this is
a reasonable model for web login behavior~\cite{Blocki2013, Rasch}. In our
setting, the value of the average time between logins could vary anywhere
between hours and months depending on the specific application and whether a
second factor is required on every login, once in a period, or once per device.

For the exponential distribution, Equation~\ref{eqn:recursive} gives:
\[
\frac{-e^{-\lambda x} + e^{-\lambda a}}{\lambda e^{-\lambda x}} = b -x\,.
\] Conveniently, this equation admits the analytic solution
\begin{equation}
    \label{eqn:analytic} x = -\frac{W(e^{\lambda(x-a)+1})}{\lambda} + b +
1/\lambda\,,
\end{equation} 
where $W(\cdot)$ is the Lambert-W function~\cite{lambert}. The recursive
solution in this case can then be easily implemented on the second-factor
device.

Figure~\ref{fig:checkpointcomparison} compares the expected performance of the
following checkpointing procedures: na\"ive, expectation-optimal (obtained by
numerically solving Equations~\ref{eqn:full_first}-\ref{eqn:full_last}) and
recursive (obtained using Equation~\ref{eqn:analytic}). We also compare against
the pebbling scheme of Coppersmith and Jakobsson~\cite{cj}, although as we've
noted above, their scheme optimizes a different metric than ours, so it is no surprise that it does not perform as well as the recursive or expectation-optimal approaches in our setting.

\begin{figure}
    \includegraphics[width=\columnwidth]{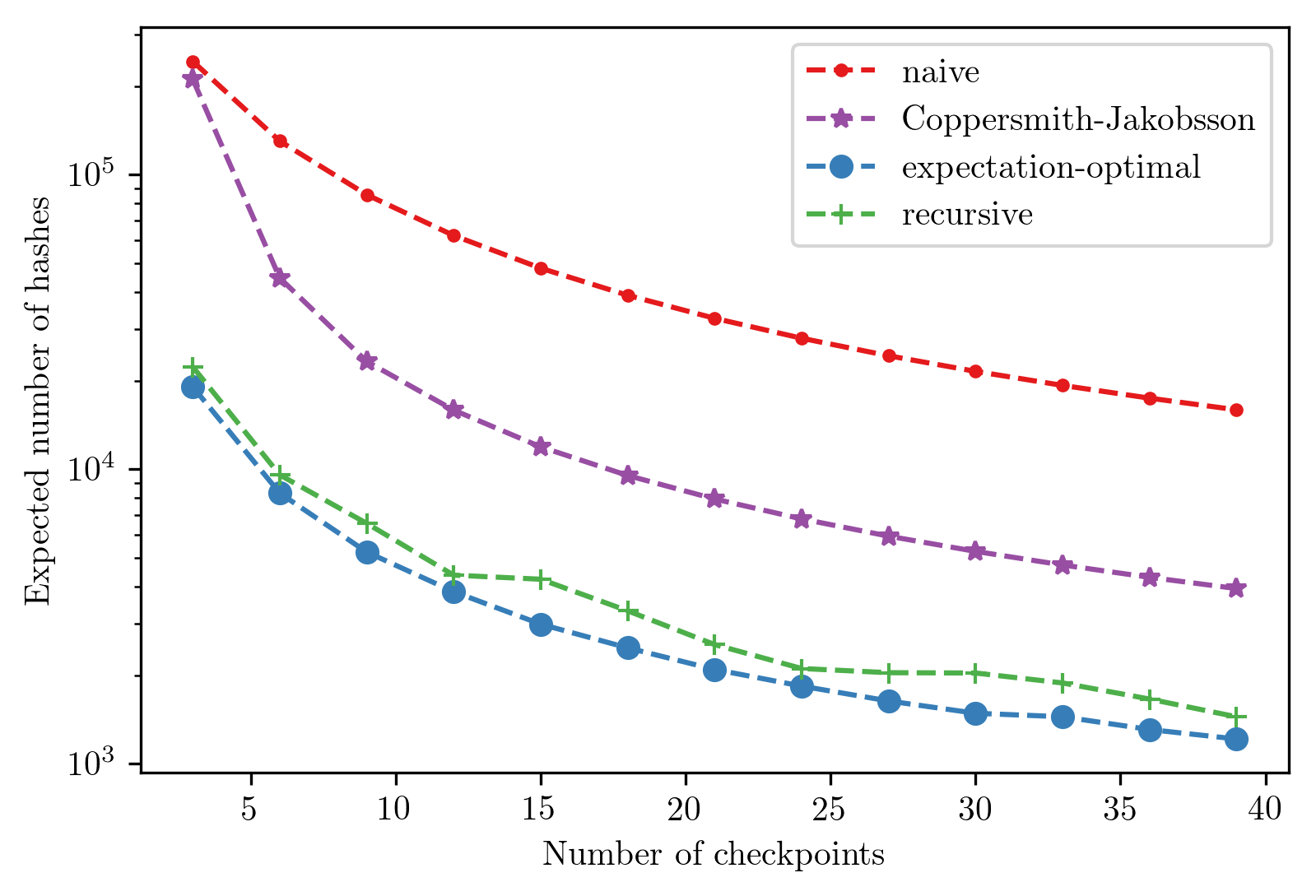}
    \caption{Performance of checkpointing schemes. Chain length
    is $1.05\times 10^6$ (one year when using $30$-second time slots). Login
    times are assumed to be a Poisson process with mean of $20160$ (one week
    when using $30$-second time slots). }
    \label{fig:checkpointcomparison}
\end{figure}

\paragraph{Balancing worst and expected performance} One disadvantage of both the
expectation-optimal and recursive checkpoints is that they perform poorly in the worst-case. Specifically, if a user does not
log in for a long period of time, a subsequent login might result in an
unacceptably high latency. A simple solution is to place several additional checkpoints in order to minimize the maximal distance between checkpoints, which  bounds the worst case number of hash computations. 

Figure~\ref{fig:checkpoint} illustrates the placement of checkpoints given by the different checkpointing schemes discussed in this section plotted along the probability density function of the exponential distribution.

\begin{figure}[!h]
    \includegraphics[width=\columnwidth]{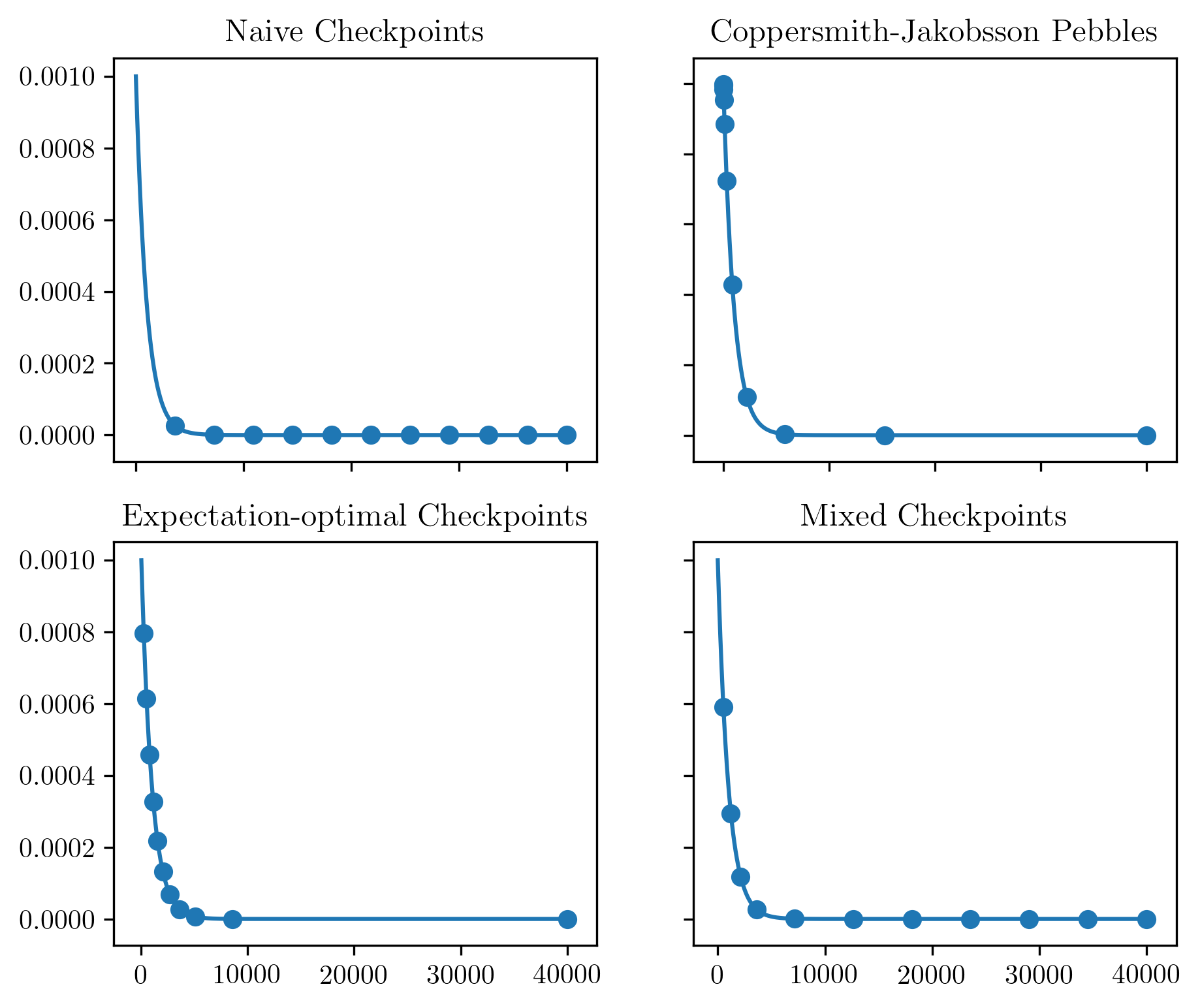}
    \caption{Illustration of different checkpointing schemes with logins modeled by the exponential
    distribution.} 
    \label{fig:checkpoint}
\end{figure}

\section{Implementation}
\label{sec:impl}
We implemented our scheme by extending the Google Authenticator Android App
and the Google Authenticator Linux Pluggable Authentication Module 
(PAM)~\cite{googleauth}.

\paragraph{Scheme Details and Parameters} We use passwords of length $130$ to
obtain the level of security discussed in Section~\ref{sec:security}. As a
concrete instantiation of a family of independent hash functions, for ${0\le i <
2^{32}}$, we take ${h_i:\{0,1\}^{130} \rightarrow 
\{0,1\}^{130}}$ to be defined as $h_i(x)=\texttt{SHA-256}(\langle i
\rangle_{32}\|\mathit{id} \| x)\vert_{130}$, where $\langle i \rangle_{32}$ is the
index of the function represented as a $32$-bit binary string, and $\mathit{id}$
is a randomly chosen $80$-bit salt. Our time-based counter uses time slots of
length $30$ seconds with $0$ being the UNIX epoch. The length of the hash chain
has to be chosen to balance the resulting maximal authentication period and the
setup time (which is dominated by the time to serially evaluate the entire  hash
chain). We use $2^{21}$ as our default hash chain length, resulting in a maximum
authentication period of approximately $2$ years and a setup time of less than
$15$ seconds on a modern mobile phone (see Section~\ref{sec:evaluation} for more
details). 

\subsection{Password Encoding}
Since the one-time passwords in our scheme are longer than those in the HMAC-based 
TOTP scheme (130 bits vs. 20 bits), we cannot encode the generated 
passwords as short numerical codes. Instead, we provide two encodings, which we 
believe are better suited for passwords of this length. 

\paragraph{QR Codes} First, our Android app supports encoding the one-time
password as a QR code. Among their many other applications, QR codes have been
widely used for second factor authentication to transmit information from the
authenticating device to the mobile device. For example, in Google
Authenticator, a website presents the user with a QR code containing the shared
secret for the TOTP scheme, which the user then scans with her mobile phone,
thus providing the authenticator app with the secret. QR codes have also been
used for transaction authentication as a communication channel from the insecure
device to the secure one~\cite{qrtan}.  

In our scheme, QR codes are used in the authentication process as a 
communication channel from the secure mobile device to the authenticating 
device. Such a use case was previously considered by~\cite{SJSN} and was shown to be practical~\cite{oneswipe}.  Specifically, our app encodes 
the 130 bit password as a QR code of size $21\times 21$ modules, which is then displayed to the user.
To log in on a different device, the user can then use that device's camera to scan 
the QR code from the mobile phone's screen. This method is best suited for use 
on laptops, tablets, and phones, where built-in cameras are ubiquitous, yet it 
can also be used on desktops with webcams. 
The QR code password encoding also provides a clear visualization of the 
relatively short length of our passwords compared to schemes using public key 
cryptography. For example, the standard ECDSA digital signature 
scheme~\cite{ecdsa} with a comparable level of security would result in 
$512$-bit long one-time passwords, which would consequently require larger 
$33\times 33$ QR codes~\cite{qrcodes} (a visual comparison appears in 
Figures~\ref{fig:ourqr} and \ref{fig:sigqr}). More recent digital signature 
constructions~\cite{bls, bernstein2012high} could be used to obtain shorter 
signatures, yet at $384$ and $256$ bits, respectively, those are still considerably longer than the one-time 
passwords in our scheme.

\begin{figure}[t] \label{fig:qr}
    \centering
    \ignore{
        \subfloat[Screenshot][QR Code password encoding in the mobile app]{
            \includegraphics[width=0.2\textwidth]{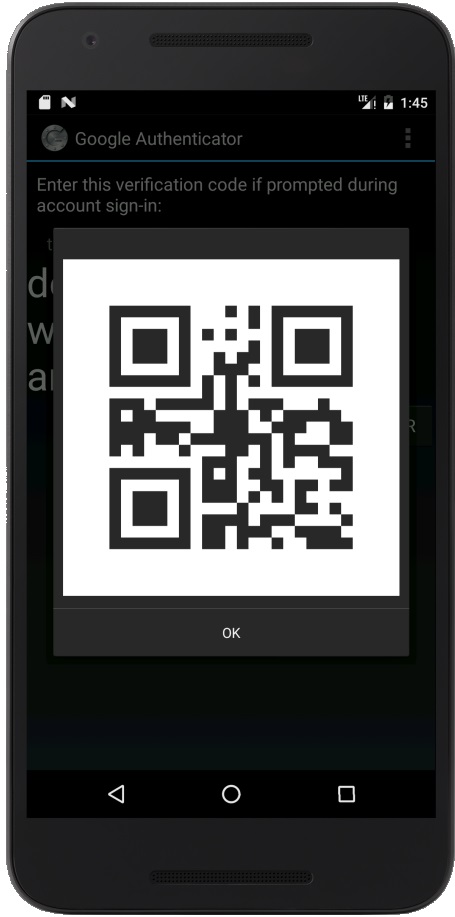}
            \label{fig:screenshot}
        }
        \qquad
    }
    \subfloat[Our Qr][21x21 QR encoding of a \\128-bit OTP]
    {\includegraphics[width=0.2\textwidth]
        {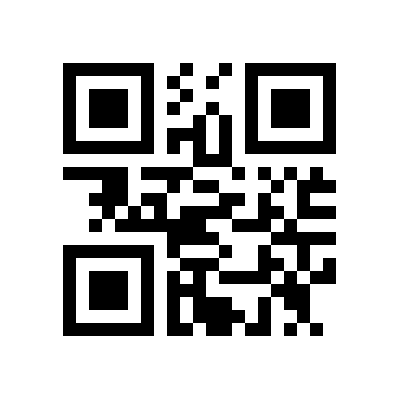}
        \label{fig:ourqr}}
    \subfloat[Our Qr][33x33 QR encoding of a 512-bit signature]
    {\includegraphics[width=0.2\textwidth]
        {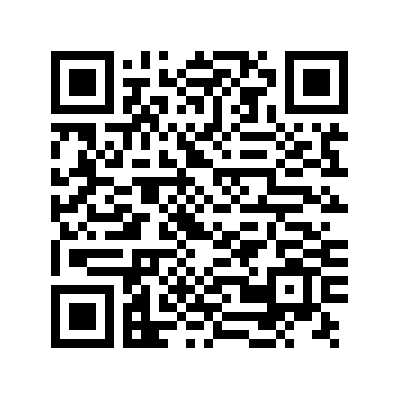}
        \label{fig:sigqr}}
    \caption{Password encoding using QR codes: T/Key vs. ECDSA signatures} 
    \label{fig:QR}
\end{figure}

\paragraph{Manual Entry} Since our usage of QR codes requires the sign-in device
to have a camera, we present an alternative method that can be used for devices
without cameras. In these instances, our Android app also encodes one-time
passwords using a public word list. Using a word list of $2048$ short words  (1
to 4 letters), as used in S/Key, results in 12-word passwords, and using a
larger $4096$ word list (of words up to 6 letters long), results in 11-word
passwords. Additionally, more specialized word lists such as those
in~\cite{wordlist} can be used if word lists that enable autofilling and error
correction are desired. These would be particularly useful if the sign-in device
was a mobile phone.

Alternatively, it would be possible to generate the one-time passwords as
arbitrary strings that the user would then manually enter. Assuming every
character in the strings has 6 bits of entropy (which is roughly the case for
case-sensitive alphanumeric strings), the resulting one-time passwords would be
strings composed of 22 characters. While typing these one-time passwords
manually would be cumbersome, they are at least somewhat practical, as opposed
to $512$ bit/$86$ character long digital signatures.

\paragraph{Hardware Authentication Devices} USB-based hardware authentication
devices, such as Yubikey~\cite{Yubikey} are often used instead of mobile phone
apps for generating TOTP passwords. They offer two main advantages: (i) after
the initial setup, the TOTP secret never has to leave the secure hardware, which
makes it more secure against client-side malware, and (ii) such authentication
devices are capable of emulating a keyboard and can ``type'' the generated
one-time passwords into the relevant password field when the user presses a
button on the device. However, hardware tokens do not protect the
TOTP secret on the server. Additionally, the registration phase is
still susceptible to malware since the TOTP secret needs to be loaded into the
hardware token. The newer FIDO U2F protocol~\cite{u2f} addresses these problems,
yet it requires specialized support by the browser and two-way communication.

Hardware authentication devices and T/Key could therefore be well-suited for
each other: the hardware device would generate the hash chain, store the secret,
and provide the server with the initial password. When the user needs to
authenticate, the hardware token would traverse the chain and generate the
one-time password. T/Key would provide the security against server-side hacks,
and the hardware token would provide the security against client-side hacks.
Moreover, the ability of the hardware token to automatically ``type'' the
password would address one of T/Key's main disadvantages, namely that the passwords
are too long for manual entry.  


\begin{table*}
    \caption{Scheme Performance.\\
    \small{130 bit long passwords, 30 second time slots, 20 mixed checkpoints.}}
    \label{table:results}
    \begin{tabular}{@{}
        l
        c
        S
        S[table-text-alignment = center]
        S[table-text-alignment = center]
        S[table-text-alignment = center]}
        \toprule {Auth. Period} & {Mean Time} & {Setup Time} &
         \multicolumn{2}{c}{Password Generation Time} & {Verification Time} \\
        & {Between Logins} & {(seconds)} & \multicolumn{2}{c}{(seconds)} &
        {(seconds)} \\
        \cmidrule(l){4-5} 
        &&& {average case} & {worst case} & \\
        \midrule 1 year & {1 week} & 7.5 & 0.3& 0.6 & 0.4 \\
         2 years  & {2 weeks} &  14 & 0.5& 0.9 & 0.8 \\
         4  years  & {1 month} & 28 & 0.8 & 1.6 & 1.6 \\
        \bottomrule
    \end{tabular}%
\end{table*}

\section{Evaluation}
\label{sec:evaluation}

We evaluated the performance of our scheme to ensure the running times of 
its different stages are acceptable for a standard authentication scenario. The 
client Android app was tested on a Samsung Galaxy S7 phone (SM-G930F) with a 
2.3 Ghz Quad-Core CPU and 4 GB of RAM. The server side Linux PAM module was 
tested
on a 2.6 Ghz i7-6600 CPU with 4 GB RAM running Ubuntu 16.04.

Our evaluation uses 130-bit passwords and hash chains of length one, two, and
four million, corresponding to one-year, two-year, and four-year authentication
periods when a new password is generated every $30$ seconds. We evaluate the
following times:
\begin{itemize}
    \item Client setup time: the time it takes for the mobile phone to first generate
    the salt and the secret and then traverse the entire hash chain to
    compute the initial password and create the registration QR code. 
    \item Client password generation time: the time to traverse the chain from
    the closest checkpoint. We present both the worst-case time, which
    corresponds to the maximal distance between two checkpoints, as well as the
    expected time, which we simulate with respect to several typical exponential
    distributions.
    \item Server verification time:  the time to traverse the entire chain on
    the server. This captures the longest possible period between logins. In
    practice, this time will be much shorter if the user logs in regularly. 
\end{itemize}

Results appear in Table~\ref{table:results}. In general, we view several seconds
as being an acceptable time for the initial setup and a sub-second time as
acceptable for both password generation and verification.

We attribute some of the differences between the hash chain traversal time on
the server and the traversal time on the phone to the fact that the former was
tested using native C code, whereas the latter was run using a Java App on the
mobile phone. 


\section{Attacks with preprocessing}
\label{sec:preprocessing} One limitation of the previously discussed security
model is that we do not allow the adversary's algorithm to depend on the choice
of the random function $h$. In practice, however, the function $h$ is \emph{not}
a random function, but rather some fixed publicly known function, such as
SHA-256. This means that the adversary could perhaps query the function prior to receiving a challenge and store some information about it that could be leveraged later. In
this section, we bound the probability of success of such an attack by $\left(
ST/N\right)^{2/3}$, where $N=2^n$ is the size of the hash function domain. To
mitigate the risk of such attacks, we show that by salting all hash functions
with a random salt of length $n$, we can bound the probability of success by
$(T/N)^{2/3}$ (assuming $S\le N$). 

More formally, an inverting attack with preprocessing proceeds as follows:
\begin{itemize}
    \item First, a pair of algorithms $(A_0,A_1)$ are fixed.
    \item Second, the function $h$ is sampled from some distribution (e.g., the
    uniform distribution over all random functions over some set). 
    \item Third, given oracle access to $h$ (which is now fixed),
    \emph{preprocessing} algorithm $A_0$ creates an advice string $st_{h}$.
    \item Finally, the \emph{online} algorithm $A_1$ is given the advice string
    $st_h$, oracle access to the same $h$, and its input $y=h(x)$. 
\end{itemize}
The complexity of an attack in this model is usually measured by the maximal
length in bits of the advice string $st_h$, which is referred to as the
``space'' of the attack and denoted by $S$, and the maximal number of oracle
queries of the algorithm $A_1$, which is often referred to as the ``time'' of
the attack and is denoted by $T$. Note that at least for lower bounds we: (i)
allow the preprocessing algorithm an unlimited number of queries to its oracle
and (ii) only measure the number of queries made by $A_1$, ignoring all other
computation.

The power of preprocessing was first demonstrated in the seminal work of
Hellman~\cite{hellman1980cryptanalytic}, who showed that with preprocessing,
one-way permutations can be inverted much faster than by brute force.
Specifically, Hellman showed that for every one-way permutation
$f:[N]\rightarrow [N]$ and for every choice of parameters $S,T$ satisfying
$T\cdot S \ge N$, there exists an attack with preprocessing which uses space $S$
and time $T$.  Hellman also gave an argument for inverting a random function
with time-space tradeoff $ T\cdot S^2 \ge N^2$. Subsequently, Fiat and
Naor~\cite{fiat1991rigorous} gave an algorithm that works for all functions. The
inversion algorithm was further improved when Oechslin~\cite{oechslin2003}
introduced rainbow tables and demonstrated how they can be used to break Windows
password hashes.

Yao~\cite{yao90} investigated the limits of such attacks and proved that
${S\cdot T\ge \Omega(N)}$ is in fact necessary to invert a random function on
every point in its image. Yao's lower bound was further extended in
\cite{gennarotrevisan, wee2005, de2010}, which showed that attacks that invert a
random function with probability $\epsilon$ must satisfy
$ST\ge\Omega(\epsilon N)$. Recently, Dodis, Guo, and Katz~\cite{DGK} extended
these results by proving that the common defense of \emph{salting} is effective
in limiting the power of preprocessing in attacks against several common
cryptographic primitives. Specifically, for one way functions, they show: 
\begin{theorem}[\cite{DGK}]
    \label{thm:dgk} Let $h:[M]\times[N]\rightarrow [N]$ be a random function.
  Let $(A_0,A_1)$ be a pair of algorithms that get oracle access to $h$ such
  that $A_0$ outputs an advice string of length $S$ bits, $A_1$ makes at most
  $T$ oracle queries, and
    \[
    \P_{h,\, m\in [M]\,,x \in [N]}
    \left[h\left(m,A_1^h(A^h_0, m, h(m,x))\right) = h(m,x)\right] = \epsilon\,. 
    \] Then, 
\[ T\left(1+\frac SM\right) \ge \tilde\Omega(\epsilon N)\,.
\]
\end{theorem}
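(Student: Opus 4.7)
The plan is to prove this lower bound via an encoding (compression) argument, extending Yao's classical technique and its refinements by Gennaro--Trevisan and De--Trevisan--Tulsiani that are cited just above the statement. The intuition is that a successful preprocessing attacker lets us describe a typical $h$ using noticeably fewer than its entropy of $NM\log N$ bits, contradicting the incompressibility of a uniformly random function.

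First I would isolate where the attack works: by Markov's inequality, there is a set $G\subseteq[M]$ of at least $\epsilon M/2$ ``good'' salts for which $A_1^h(st_h,m,\cdot)$ inverts on a set $Y_m\subseteq[N]$ of at least $\epsilon N/2$ images. Then I would construct an encoding of $h$ consisting of (i) the $S$-bit advice $st_h$, (ii) short combinatorial descriptions of $G$ and each $Y_m$, (iii) the full table of $h$ on bad salts, and (iv) the values of $h(m,\cdot)$ for $m\in G$ outside a carefully chosen set $\mathcal S_m$ of ``saved'' inputs. The decoder recovers each saved entry $h(m,x^\ast)$ by simulating $A_1^h(st_h,m,y)$ for some $y\in Y_m$, using already-decoded oracle values to answer its at most $T$ queries, and reading off $x^\ast$ from the attacker's output. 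Each saved entry eliminates $\log N$ bits from the encoding, so incompressibility caps the total number of savable entries; matching this against the $\epsilon^2 MN/4$ candidate pairs $(m,y)$, with a factor-$T$ loss from the online queries, yields $ST\ge \tilde\Omega(\epsilon MN)$ whenever $S\gtrsim M$. Combining with the trivial query-only bound $T\ge \Omega(\epsilon N)$ that follows, with $S=0$, from the Theorem~\ref{thm:onlineinvert}-style argument, one obtains $T(1+S/M)\ge \tilde\Omega(\epsilon N)$.

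The main obstacle is managing \emph{circular dependencies}: the decoder-side simulation of $A_1^h(st_h,m,y)$ may itself query a still-undecoded saved entry, and the queries of different inversions can overlap arbitrarily, so no simple ordering of the $\mathcal S_m$ is obviously safe. The cleanest resolution is to route the proof through the \emph{presampling} framework of Unruh and of Coretti--Dodis--Guo--Steinberger: any $S$-bit-advice random oracle is, up to statistical distance $\delta$, indistinguishable from a ``$P$-bit-fixing'' oracle that agrees with a fixed partial function on at most $P=\tilde O(ST/\delta)$ inputs and is uniform elsewhere. In the bit-fixing world a random salt $m$ has only $\approx P/M$ fixed entries on its slice, so conditional on $A_1$ never hitting one (which fails with probability $\approx TP/(MN)$), its success probability is bounded by $O(T/N)$, exactly as in Theorem~\ref{thm:onlineinvert}. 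Optimizing over $\delta$ and $P$ then reproduces $T(1+S/M)\ge \tilde\Omega(\epsilon N)$ while sidestepping the circularity issue entirely, which is why I would favor this route over the direct encoding argument.
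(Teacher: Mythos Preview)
The paper does not prove this theorem at all: it is quoted verbatim from Dodis, Guo and Katz~\cite{DGK} and then used as a black box in Appendix~\ref{appendix:chain} to derive Theorem~\ref{THM:INVERTCHAIN}. There is therefore no ``paper's own proof'' to compare your proposal against.

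That said, your sketch is in the right neighborhood. The original~\cite{DGK} proof does go through a compression argument in the style of Gennaro--Trevisan and De--Trevisan--Tulsiani, and your alternative via the Unruh/Coretti--Dodis--Guo--Steinberger presampling lemma is also a known route to the same bound. One quantitative wrinkle in your compression paragraph: you count $\epsilon^2 MN/4$ candidate pairs but then assert a bound of $ST\ge\tilde\Omega(\epsilon MN)$; a naive compression over an $\epsilon/2$-fraction of salts each with an $\epsilon/2$-fraction of invertible images loses an extra factor of $\epsilon$. Getting the tight $\epsilon$ (rather than $\epsilon^2$) dependence requires the more careful greedy selection of saved entries as in De--Trevisan--Tulsiani, not just the Markov step you describe. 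Your presampling route avoids this issue and, as you note, also sidesteps the circular-dependency bookkeeping, so it is the cleaner of your two options.
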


The above result can be interpreted as stating that by using a large enough salt
space $M$ (e.g., taking $M=N$), one can effectively remove any advantage gained
by having an advice string of length $S\le N$. Here, we study the potential of
using salts to defeat attacks with preprocessing on hash chains.

Let $\mathcal{F}_{M,N}$ denote the uniform distribution over the set of all
functions from $[M]\times[N]$ to $[M]\times[N]$ such that for all $f\in 
\mathcal F_{M,N}$ and all $(s,x)\in [M]\times[N]$, $f(s,x)=(s,y)$.

\begin{theorem} 
    \label{THM:INVERTCHAIN} Let functions ${h_1,\dotsc,h_k \in
    \mathcal{F}_{M,N}}$ be chosen independently and uniformly at random, where
    ${k=o(\sqrt{N})}$. Let $(A_0,A_1)$ be a pair of algorithms that get oracle
    access to each of the functions $\{h_i\}_{i=1}^k$, such that $A_0$ outputs
    an advice string of length $S$ bits,  $A_1$ makes at most $T$ oracle
    queries, and
    \begin{equation*}
    \hspace{-2em}
    \label{eqn:inv_prob}
    \P_{\substack{h_1,\dotsc,h_k \in \mathcal{F}_{M,N} \\ m\in[M]\,x \in [N]}} 
    \left[h_{[1,k]}\left(m,A^h(A^h_0,h_{[1,k]}(m,x))\right) =
    h_{[1,k]}(m,x)\right] = \epsilon\,.
    \end{equation*}
Then, \[ T\left(1+\frac{S}{M}\right) \ge \tilde\Omega(\epsilon^{3/2}N).
\]
\end{theorem}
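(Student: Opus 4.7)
The plan is to extend the DGK bound (Theorem~\ref{thm:dgk}) from a single salted random function to a chain of $k$ independent ones. DGK's proof has two components: a general \emph{presampling} step that approximates any $(S, T)$-preprocessing attacker by a ``bit-fixing'' attacker that pre-declares at most $P$ input-output pairs of the random oracle, at the cost of an additive error term $\gamma(S, T, P, M)$; and a lower bound in the bit-fixing model itself. The presampling step is oblivious to the structure of the oracle, so I plan to reuse it as a black box applied to the $k$-function chain and redo only the bit-fixing analysis in the chain setting.

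Concretely, first apply DGK's presampling lemma to $(A_0, A_1)$, viewing the $k$ independent functions as a single random oracle over a suitably enlarged domain, to obtain a bit-fixing attacker whose success probability is at least $\epsilon - \gamma(S, T, P, M)$. Next, bound the bit-fixing attacker's success probability by adapting Corollary~\ref{cor:onlineall}. The $P$ presampled $(\mathrm{salt}, x) \mapsto y$ pairs effectively act as additional oracle queries, but they are only useful when their salt matches the challenge salt $m^*$. Since $m^*$ is uniform over $[M]$, only an $O(P/M)$ expected fraction of the pre-declared pairs share this salt, and the argument from the proof of Theorem~\ref{thm:onlineinvert} (extended to track both presampled pairs and online queries) yields a bit-fixing bound of roughly $\tilde{O}\!\left(\frac{T + Pk/M + k}{N}\right)$. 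The additional factor of $k$ relative to the single-function bound reflects the shrinkage of the chain image by $\sim k$ (captured by the shrinkage and preimage-count lemmas discussed earlier in the section), which gives chain-image points $\sim k$ chain-preimages each.

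Combining the two steps gives $\epsilon \le \tilde{O}\!\left(\frac{T + Pk/M + k}{N}\right) + \gamma(S, T, P, M)$, and optimizing the choice of $P$ to balance the presampling error against the online term yields the claimed $T\!\left(1 + \frac{S}{M}\right) \ge \tilde{\Omega}(\epsilon^{3/2} N)$. The main obstacle is the chain-specific bit-fixing analysis: tracking how $P$ pre-declared pairs can assist an adversary navigating a length-$k$ chain across $M$ possible salts, while preserving the $(1 + S/M)$ factor that makes DGK's bound tight. The $\epsilon^{3/2}$ exponent (as opposed to DGK's $\epsilon$) arises precisely because the chain introduces the extra $k$-factor in the online term, which the optimization over $P$ then converts into a $3/2$-power loss in $\epsilon$ when traded against the presampling error.
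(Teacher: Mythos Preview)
Your route is genuinely different from the paper's. The paper does \emph{not} open up the DGK machinery or redo a bit-fixing analysis for chains. Instead it gives a black-box reduction: starting from a chain-inverter $(A_0,A_1)$, it (i) restricts to ``good'' inputs $x$ on which $A_1$ outputs exactly $x$ (not just some chain-preimage) and for which the number of chain-preimages $L_{h_{[1,k]}(x)}$ is at most $2k/\sqrt{\epsilon}$; (ii) uses averaging to find a single index $i^*$ and a fixed choice of the other $k-1$ functions for which $A_1$ still succeeds on good points while making at most $2T/k$ queries to $h_{i^*}$; and (iii) hardwires those $k-1$ functions to obtain a single-function inverter to which Theorem~\ref{thm:dgk} is applied verbatim. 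The $\epsilon^{3/2}$ exponent arises in step (i): converting ``outputs some preimage'' into ``outputs the specific $x$'' costs a factor equal to the number of preimages, and Chebyshev on $L$ (Lemma~\ref{lem:preimages}) lets one cap that number at $2k/\sqrt{\epsilon}$ except with probability $\epsilon/2$, yielding $\Pr[x\in G]\ge \epsilon^{3/2}/(4k)$; the two stray factors of $k$ then cancel against the $2T/k$ query budget when DGK is invoked.

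Your explanation of where $\epsilon^{3/2}$ comes from does not line up with this, and as stated it does not line up with your own outline either. You attribute the $3/2$ to ``the chain introduc[ing] the extra $k$-factor in the online term,'' but Theorem~\ref{thm:onlineinvert} shows the online chain bound is $(2T+3)/N$ with \emph{no} $k$-factor loss, so there is no such term to trade off. If one takes your claimed bit-fixing bound $\tilde O((T+Pk/M+k)/N)$ at face value and combines it with the standard presampling conversion $P=\tilde O(ST)$, the optimization yields $T(1+Sk/M)\ge\tilde\Omega(\epsilon N)$, not $T(1+S/M)\ge\tilde\Omega(\epsilon^{3/2}N)$; and if the bit-fixing bound is actually $\tilde O((T+P/M+k)/N)$ (which is what a straight adaptation of Corollary~\ref{cor:onlineall} would suggest, treating the $\approx P/M$ salt-matching presampled pairs as extra queries), you would get the \emph{stronger} conclusion $T(1+S/M)\ge\tilde\Omega(\epsilon N)$. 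The paper explicitly flags this as open in the ``Optimality'' paragraph, so if you want to pursue the direct presampling route you should expect either to beat the stated theorem or to encounter a real obstruction in the chain bit-fixing step; either way, the $3/2$ will not fall out of the balancing argument you describe.
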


We prove this theorem in Appendix~\ref{appendix:chain}.

\paragraph{Optimality} We do not know whether the above loss in the dependence
on $\epsilon$ is optimal. It would be interesting to try to prove a stronger
version of the above bound by directly applying the techniques of~
\cite{gennarotrevisan}. Even in the setting of constant $\epsilon$, where one
looks for the optimal dependence between $S,T$ and $N$, we do not know of an
attack matching the above bound for arbitrary intermediate values of $T$ and $S$
(apart from the boundary scenarios $T=N$ or $S=\frac Nk$). Rainbow
tables~\cite{oechslin2003}, which are the best generic attack to invert random
functions, give $S\sqrt{2T}=N$. Since a hash chain is not a random function (it
has many more collisions in expectation), the expected performance of rainbow
tables in our case is far from obvious. For arbitrary (rather than random)
functions, the best known attacks~\cite{fiat1991rigorous} have higher complexity
$TS^2=q N^3$, where $q$ is the collision probability of the function. Finding
better attacks is an interesting open question.

\subsection{Security of T/Key against preprocessing} 
Within the context of T/Key, Theorem~\ref{THM:INVERTCHAIN} leaves a couple of
gaps from our goal to make the salted T/Key scheme as secure against attacks
with preprocessing as it is secure against attacks without preprocessing. First
it has suboptimal dependence on the success probability $\epsilon$. Note that if
one only wants to rule out attacks that succeed with constant success
probability (say $1/2$ or $0.01$), then this gap is immaterial in terms of its
impact on the security parameters. Second, the theorem currently bounds the
probability to invert the entire hash chain, whereas to use it in Attack
Game~\ref{def:game}, one needs to prove a stronger version in which the
attacker can invert a chain suffix of his choice. We leave these two gaps as two
open problems.


\section{Related Work}

For a discussion of the many weaknesses of static passwords, 
see~\cite{herley2012research}.
One-time passwords were introduced by Lamport~\cite{lamport81} and later 
implemented as S/Key~\cite{skey}. HOTP and TOTP were proposed in 
\cite{hotp} and \cite{totp}, respectively. For a review and comparison of 
authentication schemes, see~\cite{o2003comparing, bonneau2012quest}. 
Leveraging trusted handheld devices to improve authentication security was discussed in
~\cite{Balfanz} and~\cite{mannan2007using}. Two-factor authentication schemes were analyzed rigorously in ~\cite{SJSN}, 
which proposes a suite of efficient protocols with various usability and security tradeoffs.

\paragraph{Online Two-Factor Authentication}
A large body of work has been devoted to the online setting, where one allows 
bidirectional digital communication between the server and the second-factor device~\cite{mannan2007using,garriss2008trustworthy,phoneauth,u2f, 
duoprompt}. In this setting, secrets on the server can usually be avoided by 
using public-key cryptography. We especially call the reader's attention to the 
work of Shirvanian et al.~\cite{SJSN}, who study multiple QR-based protocols.
In one of their schemes, called ``LBD-QR-PIN,''  the 
mobile device generates a key pair and sends the public key to the 
server. Subsequently, on each authentication attempt, the server generates a 
random 128-bit challenge, encrypts it using the client's public key, and sends 
it to the authenticating device. The authenticating device encodes the challenge 
as a QR code, which the user then scans using his mobile device. The mobile 
device decrypts the challenge using its stored private key, computes a short 6 
digit hash of the challenge, and presents it to the user. The user then enters 
this 6 digit code on the authenticating device, which sends it to the server for 
verification. A big advantage of this scheme lies in the fact that the 
messages that the client sends are very short and can therefore easily be entered manually by the user.

\paragraph{Hash Chains} For an overview of hash chains and their applications,
see~\cite{HJP, cj, jakobsson, goyal}. In particular, Hu et al.~\cite{HJP}
provide two different constructions of one-way hash chains, the Sandwich-chain
and the Comb Skipchain, which enable faster verification. They are less suited
for our setting since skipping segments of the chain requires the prover to
provide the verifier with additional values (which would result in longer
passwords). Goyal~\cite{goyal} proposes a reinitializable hash chain, a hash
chain with the property that it can be securely reinitialized when the root is
reached. Finally,~\cite{cj, jakobsson} discuss optimal time-memory tradeoffs
for sequential hash chain traversal. On the theoretical side, statistical
properties of the composition of random functions were studied as early
as~\cite{rubin1953} and gained prominence in the context of population dynamics
in the work of Kingman~\cite{kingman1982coalescent}. The size of the image of a set under the
iterated application of a random function was studied by Flajolet and
Odlyzko~\cite{Flajolet1990} and later in the context of rainbow tables in
\cite{oechslin2003,Avoine}. The size of the image of a set under compositions of
independent random functions was studied by Zubkov and Serov~\cite{zubkov2015,
zubkov2017}, who provide several useful tail bounds, some of which we use in
Appendix~\ref{appendix:composition}.

\paragraph{Attacks with preprocessing} Time-space tradeoffs, which we use as our
model in Section~\ref{sec:preprocessing}, were introduced by Hellman
\cite{hellman1980cryptanalytic} and later rigorously studied by Fiat and
Naor~\cite{fiat1991rigorous}. The lower bound to invert a function in this model
was shown by Yao~\cite{yao90} and, subsequently, extended
in~\cite{gennarotrevisan, wee2005, de2010, DGK}. The work of Gennaro and
Trevisan~\cite{gennarotrevisan} was particularly influential due to its
introduction of the ``compression paradigm'' for proving these kinds of lower
bounds. More attacks in this model were shown by Bernstein and Lange~\cite{BL}.

\section{Conclusions}

We presented a new time-based offline one-time password scheme, T/Key, that
has no server secrets. Prior work either was not time-based, as in S/Key, or
required secrets to be stored on the server, as in TOTP. We implemented T/Key as
a mobile app and showed it performs well, with sub-15 second setup time and
sub-second password generation and verification. To speed up the password
generation phase, we described a near-optimal algorithm for storing checkpoints
on the client, while limiting the amount of required memory. We gave a formal
security analysis of T/Key by proving a lower bound on the time needed to break
the scheme, which shows it is as secure as the underlying hash function. 
We showed
that by using independent hash functions, as opposed to iterating the same
function, we obtain better hardness results and eliminate several 
security vulnerabilities present in S/Key. Finally, we studied the 
general question of hash chain security and proved a
time-space lower bound on the amount of work needed to invert a hash chain in
the random oracle model with preprocessing.

\begin{acks}
We thank David Mazi\`eres for very helpful discussions about this work.
This work is supported by NSF, DARPA, a grant from ONR, and the
Simons Foundation. Opinions, findings, and conclusions or
recommendations expressed in this material are those of the authors
and do not necessarily reflect the views of DARPA.
\end{acks}

\appendix

\section{Collisions in Random Functions}
\label{appendix:composition}
We first need to investigate some statistical properties of compositions of 
random functions. Starting with the work of Kingman~\cite{kingman1982coalescent}, 
the distribution of the image size $\left|h_{[1,k]}([N])\right|$, and 
specifically its convergence rate to $1$ (\cite{donnelly1991weak, 
dalal2002compositions}), was studied.
In our setting, we are more interested in the properties of $h_{[1,k]}$ for moderate values of $k$, and specifically, we assume $k=o\left(\sqrt N \right)$. 
\begin{lemma}\label{lem:shrinkage}
    Let $k,N\in \mathbb N$ such that $k=o\left(\sqrt{N}\right)$. Then,
    \[
        \Ex_{h_1,\dotsc,h_k}\left[|h_{[1,k]}([N])|\right] = O\left(\frac Nk\right).
    \]
\end{lemma}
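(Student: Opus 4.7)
The plan is to track the expected image size $a_j \deq \Ex\!\left[|h_{[1,j]}([N])|\right]$ step by step and solve a one-step recurrence. Conditioning on the previous image being a set $S$ of size $s$, applying a freshly drawn uniform $h_{j+1}$ makes $h_{j+1}(S)$ behave like the set of distinct values among $s$ i.i.d.\ uniform samples from $[N]$, so $\Ex\!\left[|h_{[1,j+1]}([N])|\ \big|\ |h_{[1,j]}([N])| = s\right] = g(s)$, where $g(s) \deq N\bigl(1 - (1-1/N)^s\bigr)$. Since $g$ is concave on $[0,\infty)$ (its second derivative equals $-N(\ln(1-1/N))^2(1-1/N)^s < 0$), Jensen's inequality in the \emph{correct} direction yields $a_{j+1} \le g(a_j)$, reducing the problem to a deterministic scalar recursion with $a_0 = N$.

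The next step is to extract a usable quadratic decrease. Writing $\lambda \deq -\ln(1-1/N) = \frac{1}{N} + O(1/N^2)$, we have $g(a) = N(1 - e^{-\lambda a})$. Using the elementary inequality $1 - e^{-x} \le x - x^2/2 + x^3/6$ for $x \ge 0$, together with $N\lambda = 1 + O(1/N)$ and $N\lambda^2 \ge 1/N$, a short calculation gives
\[
g(a) \;\le\; a - \tfrac{a^2}{4N} + O\!\left(\tfrac{a^3}{N^2}\right)
\]
once $a \ge 2$. A trivial computation shows $a_1 \le N(1-(1-1/N)^N) < (1-1/e+o(1))N$, so after a single step we already have $a_j \le N/2$, and after a bounded number of steps we may assume $a_j \le N/8$, so the cubic error term is absorbed into the quadratic one, giving $a_{j+1} \le a_j - a_j^2/(8N)$ in the relevant regime.

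Given this clean recursion, I would finish by the standard reciprocal trick. From $a_{j+1} \le a_j(1 - a_j/(8N))$,
\[
\frac{1}{a_{j+1}} \;\ge\; \frac{1}{a_j}\cdot\frac{1}{1 - a_j/(8N)} \;\ge\; \frac{1}{a_j} + \frac{1}{8N},
\]
so by induction $1/a_j \ge 1/a_{j_0} + (j-j_0)/(8N) \ge (j-j_0)/(8N)$, whence $a_k \le 8N/(k - j_0) = O(N/k)$, as claimed.

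The main obstacle is making the Taylor-expansion step watertight: the crude bound $g(a) \le a$ is useless, and one needs the $-\Theta(a^2/N)$ term to survive the cubic error. This is exactly where the hypothesis $k = o(\sqrt N)$ is used implicitly: throughout the iteration the target bound $a_j \lesssim N/j$ keeps $a_j = \omega(\sqrt N)$, so $a_j^2/N$ dominates $a_j^3/N^2$, and the warm-up phase for dropping $a_j$ below $N/8$ takes only a constant number of steps and thus does not affect the $O(N/k)$ conclusion.
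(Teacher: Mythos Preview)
Your argument is correct and follows essentially the same route as the paper's sketch: both track the expected image size through the one-step occupancy recurrence and extract the $O(N/k)$ bound from the quadratic term in the Taylor expansion of $1-e^{-\alpha}$. Your version is simply a rigorous execution of that sketch---you make explicit the Jensen step (which the paper elides when it writes $\alpha_{k+1}=\Ex[|h_{k+1}([\alpha_k N])|/N]$) and you solve the recursion via the reciprocal-increment trick rather than the paper's guess-and-verify of $\alpha_k\approx 2/k$.
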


\begin{proof}
   A formal proof can be found in~\cite{zubkov2015}. Here, we provide a brief sketch of the argument. Let
    \[
        \alpha_k=\Ex_{h_1,\dots,h_k}\left[\left|h_{[1,k]}([N])\right|/N\right]\,.
    \]
    Then,
    \[
        \alpha_{k+1}=\Ex_{h_{k+1}}\left[|h_{k+1}([\alpha_kN])|/N\right]\,.
    \]
    The last expression can be interpreted as a simple occupancy problem of independently throwing $\alpha_kN$ balls into $N$ bins and reduces to
    the probability that a bin is not empty:
    \[
    \alpha_{k+1}  = 1-(1-1/N)^{\alpha_kN}\,.
    \]
    For large $N$, we can make the approximation $(1 - 1/N)^N \approx 1/e$. Substituting this gives 
    \[
    \alpha_{k+1} = 1 - e^{-\alpha_k}
    \]
    and Taylor expanding the resulting expression gives the following approximation for the recursive relation:
    \[
    \alpha_{k+1}= 1 - (1-\alpha_k+\alpha_k^2/2 - 
    O(\alpha_k^3)) = \alpha_k-\alpha_k^2/2 + O(\alpha_k^3)\,.
    \]
    Plugging-in the guess $\alpha_k= 2/k + O(1/k^3)$ into the right hand side gives
    \begin{align*}
        \alpha_{k+1} &= 2/k - 2/k^2 + O(1/k^3) = 2(k-1)/k^2 + O(1/k^3) \\
        &= 2/(k+1) - 2/(k^2(k+1)) + O(1/k^3)  = 2/(k+1) + O(1/k^3)
    \end{align*}
    and therefore
    \[
    \alpha_k= 2/k + O(1/k^3)
    \]
    satisfies the recursive relation.
    
\end{proof}

We also need to estimate the probability that any two points in the domain collide under the hash function. We make use of the following lemma, due to ~\cite{zubkov2015}, and give a short proof here for completeness.

\begin{lemma}[\cite{zubkov2015}]
    Let $k,N\in \mathbb N$ such that $k=o\left(\sqrt{N}\right)$, and let 
    $x,x',x''\in [N]$ such that $x\neq x' \neq x''$. Then, 
    \begin{align*}
    &\Pr_{h_1,\dotsc,h_k}\left[ h_{[1,k]}(x) = h_{[1,k]}(x') \right] = 
    \frac kN - o\left(\tfrac 1N\right) \\
    &\Pr_{h_1,\dotsc,h_k}\left[ h_{[1,k]}(x) = h_{[1,k]}(x') = h_{[1,k]}(x'') 
    \right] = \frac {k(3k-1 + o_{\scriptscriptstyle N}(1))}{2N^2} .
    \end{align*}
\end{lemma}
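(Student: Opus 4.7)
The plan is to track the trajectories of $x, x', x''$ under the chain via the number of distinct images at each level, using the principle of deferred decisions to reveal each $h_j$ only on the points where it is actually queried. The key observation is that as long as two trajectories have not yet merged, the next hash function is fresh and uniform on those inputs, so the conditional probability of a new collision is exactly $1/N$ per pair.

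For the first identity, let $q_j=\Pr[h_{[1,j]}(x)\neq h_{[1,j]}(x')]$. Since the two trajectories remain either permanently merged or, if still separate, are hit by an independent uniform $h_{j+1}$, we get $q_{j+1}=q_j(1-1/N)$, hence $q_k=(1-1/N)^k$. A binomial expansion gives $\Pr[h_{[1,k]}(x)=h_{[1,k]}(x')] = k/N - \binom{k}{2}/N^2 + O(k^3/N^3)$, and the hypothesis $k=o(\sqrt N)$ makes the correction $o(1/N)$.

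For the second identity, let $S_j\in\{1,2,3\}$ be the number of distinct values in $(h_{[1,j]}(x), h_{[1,j]}(x'), h_{[1,j]}(x''))$, with $S_0=3$. Conditioning on $h_{j+1}$ applied to the current distinct values yields transition probabilities: from state $3$, stay at $3$ with probability $(1-1/N)(1-2/N)$, drop to $1$ with probability $1/N^2$, and drop to $2$ with the remaining probability $3/N-3/N^2$; from state $2$, drop to $1$ with probability $1/N$; and state $1$ is absorbing. Writing $a_j,b_j,c_j$ for the probabilities of states $3,2,1$, we obtain $a_j=\alpha^j$ with $\alpha=1-3/N+2/N^2$, and $b_j$ from a linear recurrence that solves to $b_j=\tfrac 32(\beta^j-\alpha^j)$ with $\beta=1-1/N$. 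The desired probability is then $c_k = 1 + \tfrac 12\alpha^k - \tfrac 32\beta^k$.

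The main step, and the main technical obstacle, is to perform a careful second-order expansion of $\alpha^k$ and $\beta^k$ in $1/N$ and verify the exact cancellations. The constant terms and the $1/N$ contributions from $\tfrac 12\alpha^k$ and $\tfrac 32\beta^k$ must cancel against the leading $1$ (both give $-3k/(2N)$), so that the answer starts only at order $1/N^2$; the $1/N^2$ contributions then combine to $(3k^2-k)/(2N^2)$. Under $k=o(\sqrt N)$, the remaining terms are of order $k^3/N^3$ and $k/N^3$, which are both $o(k/N^2)$, giving the claimed $k(3k-1+o(1))/(2N^2)$. Ensuring that this cancellation is exact to first order — rather than leaving a spurious $\Theta(k/N)$ term — is what requires the coefficients in the Markov chain to be tracked with some care.
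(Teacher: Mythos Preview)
Your argument is correct. The first identity is handled exactly as in the paper: both compute $\Pr[h_{[1,k]}(x)\neq h_{[1,k]}(x')]=(1-1/N)^k$ via the conditional independence observation and then expand.

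For the three-way collision, your route genuinely differs from the paper's. The paper conditions on the first level $i+1$ at which $x$ and $x'$ merge, writes the probability that $x''$ has also merged by level $k$ as $1-(1-2/N)^i(1-1/N)^{k-i}$, multiplies by the probability $(1-1/N)^i\cdot\tfrac1N$ that the first merge happens exactly at level $i+1$, and sums over $i$; the asymptotics are then extracted termwise inside the sum. You instead track the full three-state Markov chain on $|\{h_{[1,j]}(x),h_{[1,j]}(x'),h_{[1,j]}(x'')\}|$, solve it in closed form to obtain $c_k=1+\tfrac12\alpha^k-\tfrac32\beta^k$ with $\alpha=(1-1/N)(1-2/N)$ and $\beta=1-1/N$, and only then expand. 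Your closed form is cleaner and the cancellation of the constant and $1/N$ terms is transparent once the coefficients $\tfrac12$ and $\tfrac32$ are in hand; it also generalizes more readily to $m$-way collisions via the same Markov-chain machinery. The paper's decomposition avoids solving the recurrence but requires bookkeeping inside a $k$-term sum. Both yield $(3k^2-k)/(2N^2)+o(k/N^2)$ under $k=o(\sqrt N)$.
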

\begin{proof}
    Observe that since $h_1,\dotsc,h_k$ are independent, the random variables $h_{[1,i+1]}(x)$ 
    and $h_{[1,i+1]}(x')$ are independent when conditioned 
    on $h_{[1,i]}(x) \neq h_{[1,i]}(x')$. Using this fact gives
    \begin{align*}
    &\Pr_{h_1,\dotsc,h_k}\left[h_{[1,k]}(x)\neq h_{[1,k]}(x')\right]\\
     &=        
    \prod_{i=1}^k\Pr_{h_i}\left[h_{[1,i]}(x)\neq 
    h_{[1,i]}(x')\middle| h_{[1,i-1]}(x)\neq
    h_{[1,i-1]}(x')\right] \\
    &= \prod_{i=1}^k \left(1-\frac 1N\right)  =
    \left(1-\frac 1N\right)^k     
    \end{align*}
    and subsequently,
    \begin{align*}
    &\Pr_{h_1,\dotsc,h_k}\left[ h_{[1,k]}(x) = h_{[1,k]}(x') \right] = 
    1-\left(1-\frac 1N\right)^k\\ &= 
    1-\left(1-\frac kN + O\left(\frac {k^2}{N^2}\right)\right) =
    \frac kN - o\left(\tfrac 1N\right).  
    \end{align*}
    To show the second statement of the lemma, we break down the probability of a 3-collision between 
    $x,x',x''$ by iterating through the different levels in the hash chain 
    where a collision between $x$ and $x'$ could occur. We have that
    \small
    \begin{alignat*}{2}
    &\mathmakebox[0pt][l]{\Pr\left[ h_{[1,k]}(x) = h_{[1,k]}(x') = h_{[1,k]}
        (x'')
        \right]}  \\
    =\,&\mathmakebox[0pt][l]{\Pr\left[ h_{[1,k]}(x) = h_{[1,k]}(x'') \middle| 
        h_{[1,k]}(x)= h_{[1,k]}(x') \right] \cdot   \Pr\left[
        h_{[1,k]}(x)= h_{[1,k]}(x') \right]}
    \\
    = &\sum_{i=0}^{k-1}\bigg(&&\Pr\left[h_{[1,k]}(x) = h_{[1,k]}(x'') \middle|
    \min\left\{i':h_{[1,i']}(x)= h_{[1,i']}(x')\right\}=i+1\right]
    \\ 
    &&& \cdot \Pr\left[\min\left\{i':h_{[1,i']}(x) = 
    h_{[1,i']}(x') \right\}=i+1 \right]\bigg) \\
    = &\mathmakebox[0pt][l]{\sum_{i=0}^{k-1} \left(1- \left(1-\frac 2N 
        \right)^i
        \cdot\left(1-\frac 1N\right)^{k-i}\right)\cdot\left(1-\frac 1N
        \right)^i\cdot \frac 1N} \\
    = &\mathmakebox[0pt][l]{\sum_{i=0}^{k-1} \left(1-\left(1-\frac{2i}{N} + o
        \left(\frac 1N\right)\right)\cdot\left(1-\frac {k-i}N + o\left(\frac 1N
        \right)
        \right)\right)\cdot\left(1-\frac{i}{N} + o\left(\frac 1N\right)\right)
        \cdot 
        \frac 1N} \\
    = &\mathmakebox[0pt][l]{\sum_{i=0}^{k-1} 
        \left(\frac {i+k}{N^2} +o\left(\frac 1 {N^2}\right)\right) 
        = \frac{3k^2-k}{2N^2} + k\cdot o\left(\frac 1{N^2}
        \right)}
    \end{alignat*}
    \normalsize
    as desired.
\end{proof}

The next lemma estimates, for any $x \in [N]$, the expected number of 
preimages of the point $h_{[1,k]}(x)$ and its variance. We use $I_A$ to denote the indicator variable of probability event $A$.
\begin{lemma} \label{lem:preimages_ex}
    Let $\left\{h_i\right\}_{i=1}^k\in \mathcal F_N$ be independent random
    functions, and let $L_j = \sum_{i=1}^N I_{h_{[1,k]}(i) = j}$ be a random
    variable of the number of different preimages under $h_{[1,k]}$ of $j\in[N]
    $. For every $x\in[N]$,
    \begin{align*}
        &\Ex_{h_1,\dotsc,h_k}\left[L_{h_{[1,k]}(x)}\right] = k + 1- o(1) \\
        &\Var_{h_1,\dotsc, h_k}\left[L_{h_{[1,k]}(x)}\right] = \frac12(k+1)^2\,.
    \end{align*}
\end{lemma}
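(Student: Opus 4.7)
The plan is to write $L_{h_{[1,k]}(x)}$ as a sum of indicator variables and then apply linearity of expectation together with the two-point and three-point collision probabilities established in the preceding lemma. Define $Y_i \deq I_{h_{[1,k]}(i) = h_{[1,k]}(x)}$ for $i\in[N]$, so that $L_{h_{[1,k]}(x)} = \sum_{i=1}^N Y_i$. For the expectation, I would split off the $i=x$ term, which deterministically contributes $1$, and observe that for $i\neq x$ the preceding lemma gives $\E[Y_i] = k/N - o(1/N)$. Summing over the $N-1$ remaining values yields $\E[L_{h_{[1,k]}(x)}] = 1 + (N-1)(k/N - o(1/N)) = k+1 - o(1)$, where the last step uses $k = o(\sqrt N)$ to ensure $k/N = o(1)$.

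For the variance, using $Y_i^2 = Y_i$, I would expand $L^2 = \sum_i Y_i + \sum_{i\neq j} Y_i Y_j$ and split the off-diagonal sum by whether $x$ appears among $\{i,j\}$. When exactly one of $i,j$ equals $x$ (two symmetric cases, contributing $2(N-1)$ pairs in total), the product $Y_iY_j$ reduces to the single surviving indicator whose expectation is $k/N - o(1/N)$. When $i,j,x$ are pairwise distinct, the event $Y_iY_j=1$ is exactly the event that $h_{[1,k]}$ sends all three of $x,i,j$ to the same value, whose probability by the previous lemma is $k(3k-1 + o(1))/(2N^2)$, and there are $(N-1)(N-2)$ such pairs. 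Combining these contributions gives $\E[L^2] = 1 + 3k + (3k^2-k)/2 + o(k)$, and subtracting $(\E L)^2 = (k+1)^2 + o(k)$ yields $\Var(L_{h_{[1,k]}(x)}) = k(k+1)/2 + o(k)$, which matches $(k+1)^2/2$ to leading order in $k$.

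The main obstacle is bookkeeping the error terms: I have to verify that the $o(\cdot)$ corrections in the two-collision and three-collision probabilities remain sub-leading after being summed over $\Theta(N)$ and $\Theta(N^2)$ pairs respectively. This is exactly where the hypothesis $k = o(\sqrt N)$ is used; it ensures that the $O(k^2/N^2)$ correction in the two-collision probability contributes only $O(k^2/N) = o(1)$ when summed, and the $o(k/N^2)$ correction in the three-collision probability contributes only $o(k)$ when summed over $(N-1)(N-2)$ pairs. Everything else is a careful but routine algebraic simplification.
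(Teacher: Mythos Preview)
Your proposal is correct and follows essentially the same approach as the paper: both write $L_{h_{[1,k]}(x)}$ as a sum of indicators, use linearity for the expectation, and compute $\E[L^2]$ by splitting the double sum according to how many of the three indices $x,x',x''$ coincide, invoking the two- and three-collision probabilities from the preceding lemma. Your diagonal/off-diagonal split is a cosmetic regrouping of the paper's case analysis (your $(N-1)$ diagonal terms plus $2(N-1)$ off-diagonal terms with one index equal to $x$ are exactly the paper's $3(N-1)$ ``two-collision'' terms), and your remark that the variance equals $k(k+1)/2 + o(k)$ is in fact slightly sharper than the paper, which only records the bound $\Var \le \tfrac{1}{2}(k+1)^2$.
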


\begin{proof}
    From the linearity of expectation and the previous lemma, we find that
    \begin{align*}
    \Ex_{h_1,\dotsc,h_k}\left[L_{h_{[1,k]}(x)}\right] 
    &= \Ex_{h_1,\dotsc,h_k}\left[ 
    \sum_{x'=1}^N I_{h_{[1,k]}(x)=h_{[1,k]}(x')} \right] \\
    &= \sum_{x'=1}^N
    \Ex_{h_1,\dotsc,h_k}\left[I_{h_{[1,k]}(x)=h_{[1,k]}(x')}\right] \\ 
    &= \sum_{x'=1}^N\Pr_{h_1,\dotsc,h_k}\left[h_{[1,k]}(x)=h_{[1,k]}(x')\right] 
    \\
    &= 1 + (N-1)\cdot\left(\frac kN - o(\tfrac 1N)\right) = k + 1 - o(1). 
    \end{align*}
    Additionally, 
    \small
    \begin{align*}
    &\Ex_{h_1,\dotsc,h_k}\left[L^2_{h_{[1,k]}(x)}\right] \\
    &= \Ex_{h_1,\dotsc,h_k}\left[ 
    \sum_{x'=1}^N \sum_{x''=1}^N I_{h_{[1,k]}(x)=h_{[1,k]}(x')} \cdot 
    I_{h_{[1,k]}
        (x)=h_{[1,k]}(x'')}\right] \\
    &= \Ex_{h_1,\dotsc,h_k}\left[ \sum_{x',x''=1}^N I_{h_{[1,k]}(x)=h_{[1,k]}
        (x')=h_{[1,k]}(x'')} \right] \\
    &= (N-1)(N-2) \cdot \Ex_{\substack{h_1,\dotsc,h_k\\ x,x',x''\text{ 
                different}}}
    \left[ I_{h_{[1,k]}(x)=h_{[1,k]}(x')=h_{[1,k]}(x'')}\right] \\
    &\qquad+ 3(N-1) \cdot 
    \Ex_{\substack{h_1,\dotsc,h_k\\ x \neq x'}}\left[ I_{h_{[1,k]}(x)=h_{[1,k]}
        (x')} 
    \right] + 1 \\
    &=(N-1)(N-2)\cdot \left(\frac {k(3k-1 + o_{\scriptscriptstyle N}(1))}{2N^2} 
    \right) + 3(N-1)\cdot \left(\frac kN - o\left(\tfrac 1N\right) \right) + 1 
    \\
    &= \tfrac 32 k^2  + \left(\tfrac 52 + o_{\scriptscriptstyle N}(1)\right)k + 
    1 
    \end{align*}
    \normalsize
    and thus
    \begin{align*}
    \Var_{h_1,\dotsc, h_k}\left[L_{h_{[1,k]}(x)}\right] &= \Ex_{h_1,\dotsc,h_k}
    \left[L^2_{h_{[1,k]}(x)}\right]  - \Ex_{h_1,\dotsc,h_k}\left[L_{h_{[1,k]}
        (x)}
    \right]^2 \\
    &= \tfrac 12 k^2 + \left(\tfrac 12 + o_{\scriptscriptstyle N}(1)
    \right)k \le \tfrac 12 (k+1)^2.
    \end{align*} 
\end{proof}

\begin{lemma} \label{lem:preimages}
    Let $\left\{h_i\right\}_{i=1}^k\in \mathcal F_N$ be independent 
    random 
    functions, and let $L_j = \sum_{i=1}^N I_{h_{[1,k]}(i) = j}$ be a 
    random 
    variable of the number of different preimages under $h_{[1,k]}$ of 
    $j\in[N]
    $.
    For every $x\in[N]$,
    \[
    \Pr_{h_1,\dotsc,h_k}\left[L_{h_{[1,k]}(x)} \ge 
    \frac {2k}{\sqrt{\epsilon}} \right] 
    \le \frac \epsilon 2.
    \]
\end{lemma}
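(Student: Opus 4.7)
The natural approach is to apply a second-moment tail bound, using the mean and variance of $L_{h_{[1,k]}(x)}$ that were just computed in Lemma~\ref{lem:preimages_ex}. Since both moments are of order $k^2$ (with $\mathbb{E}[L_{h_{[1,k]}(x)}] = k+1 - o(1)$ and $\mathrm{Var}[L_{h_{[1,k]}(x)}] \le \tfrac12(k+1)^2$), we have $\mathbb{E}\!\left[L_{h_{[1,k]}(x)}^2\right] = \mathrm{Var} + (\mathbb{E})^2 \le \tfrac32(k+1)^2$, which is the crucial quantitative input.

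The plan is to apply Markov's inequality to the \emph{second} moment: for any $t > 0$,
\[
\Pr\!\left[L_{h_{[1,k]}(x)} \ge t\right] \;=\; \Pr\!\left[L_{h_{[1,k]}(x)}^2 \ge t^2\right] \;\le\; \frac{\mathbb{E}\!\left[L_{h_{[1,k]}(x)}^2\right]}{t^2}.
\]
Plugging in $t = 2k/\sqrt{\epsilon}$ and the bound $\mathbb{E}[L_{h_{[1,k]}(x)}^2] \le \tfrac32(k+1)^2$ from Lemma~\ref{lem:preimages_ex} yields
\[
\Pr\!\left[L_{h_{[1,k]}(x)} \ge \frac{2k}{\sqrt{\epsilon}}\right] \;\le\; \frac{\tfrac32(k+1)^2 \, \epsilon}{4k^2},
\]
which is at most $\epsilon/2$ for $k$ large enough (concretely, once the ratio $(k+1)^2/k^2$ is close enough to $1$ that $\tfrac38 \cdot (k+1)^2/k^2 \le \tfrac12$). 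An entirely equivalent route is to use Chebyshev with $\mu \approx k+1$ and $\sigma \approx (k+1)/\sqrt 2$, which gives a slightly sharper $\epsilon/8$ bound but requires keeping track of the $\mathbb{E}[L_{h_{[1,k]}(x)}]$ term.

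There is no real obstacle here: the work was done in Lemma~\ref{lem:preimages_ex}. The only mild subtlety is handling the $o(1)$ and $o_{\scriptscriptstyle N}(1)$ terms in the second-moment estimate, which are under control because the hypothesis $k = o(\sqrt N)$ is inherited from the preceding lemmas, so these error terms can be absorbed and the desired bound of $\epsilon/2$ is obtained for all admissible $k, N$. I would simply state the two-line calculation and cite Lemma~\ref{lem:preimages_ex} for the moment bound.
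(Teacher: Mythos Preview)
Your proposal is correct and essentially matches the paper's proof: the paper applies Chebyshev directly (the alternative you mention), writing $\Pr\bigl[L_{h_{[1,k]}(x)} \ge (k+1) + \sqrt{2/\epsilon}\cdot\sigma\bigr] \le \epsilon/2$ with $\sigma = \sqrt{\tfrac12}(k+1)$ from Lemma~\ref{lem:preimages_ex} and noting that $2k/\sqrt{\epsilon}$ exceeds this threshold. Both routes are the same second-moment tail bound, differing only in whether the mean is subtracted off first.
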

\begin{proof}
    Applying Chebyshev's inequality, we obtain
    \small
    \begin{align*}
    &\Pr_{h_1,\dotsc,h_k}\left[L_{h_{[1,k]}(x)} \ge 
    \frac {2k}{\sqrt{\epsilon}} \right]  \\
       &\le \Pr_{h_1,\dotsc,h_k}
    \left[L_{h_{[1,k]}
        (x)} \ge     (k+1)+ \sqrt{\frac 2 \epsilon} \cdot \sqrt {\frac 12}(k+1) 
        \right] 
    \le 
    \frac 
    \epsilon 2 . \qedhere
    \end{align*}
    \normalsize
\end{proof}

\begin{lemma}
     Let $\left\{h_i\right\}_{i=1}^k\in \mathcal F_{M,N}$ be independent 
     identically random functions, and let $L_j = \sum_{i=1}^N 
     I_{h_{[1,k]}(i) = j}$ be a random variable of the number of different 
     preimages under $h_{[1,k]}$ of $j\in [M] \times[N]$.
    For every $(s,x)\in[M]\times[N]$,
    \[
    \Pr_{h_1,\dotsc,h_k}\left[L_{h_{[1,k]}(s,x)} \ge 
    \frac {2k}{\sqrt{\epsilon}} \right] 
    \le \frac \epsilon 2.
    \]
\end{lemma}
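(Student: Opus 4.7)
The plan is to reduce this salted lemma to the previous (unsalted) Lemma~\ref{lem:preimages} by conditioning on a fixed salt and noticing that the salt-preserving structure of $\mathcal{F}_{M,N}$ decomposes each random function into independent ``per-salt'' random functions on $[N]$.

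First I would unpack the definition of $\mathcal{F}_{M,N}$: by construction, every $h \in \mathcal{F}_{M,N}$ satisfies $h(s,x) = (s, h^{(s)}(x))$, so we may view each $h_i$ as a family $\{h_i^{(s)}\}_{s\in [M]}$ of functions on $[N]$. Because $h_i$ is sampled uniformly from $\mathcal{F}_{M,N}$, the collection $\{h_i^{(s)}\}_{s\in [M], i\in[k]}$ consists of $Mk$ mutually independent uniformly random functions in $\mathcal F_N$. In particular, for any fixed salt $s$, the tuple $(h_1^{(s)},\dotsc,h_k^{(s)})$ is distributed as $k$ independent uniform random functions from $[N]$ to $[N]$.

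Next, I would observe that for any $(s,x) \in [M]\times[N]$, the chain satisfies $h_{[1,k]}(s,x) = (s, h_{[1,k]}^{(s)}(x))$, and any preimage of $(s,y)$ under $h_{[1,k]}$ must also have first coordinate $s$. Hence the quantity $L_{h_{[1,k]}(s,x)}$ in the statement (where the sum is over the $N$ points sharing the salt $s$) equals
\[
L_{h_{[1,k]}(s,x)} = \sum_{x'=1}^N I_{h_{[1,k]}^{(s)}(x') = h_{[1,k]}^{(s)}(x)},
\]
which is exactly the random variable controlled by Lemma~\ref{lem:preimages} applied to the $k$ independent random functions $h_1^{(s)},\dotsc,h_k^{(s)}$ and the point $x \in [N]$.

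Applying that lemma conditionally on $s$ gives
\[
\Pr_{h_1,\dotsc,h_k}\!\left[L_{h_{[1,k]}(s,x)} \ge \tfrac{2k}{\sqrt{\epsilon}} \,\Big|\, s\right] \le \tfrac{\epsilon}{2},
\]
and since this bound holds uniformly for every fixed salt $s$, it holds unconditionally, completing the proof. The reduction is routine once the salt-preservation is noted; I do not anticipate any real obstacle beyond being careful with the notation and verifying that the independence claim about $\{h_i^{(s)}\}$ really does follow from the uniform distribution on salt-preserving maps.
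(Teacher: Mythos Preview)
Your proposal is correct and takes essentially the same approach as the paper: fix the salt $s$, extract the per-salt functions $h_{i,s}(x)$ (the last $n$ bits of $h_i(s,x)$), and apply Lemma~\ref{lem:preimages} to $\{h_{i,s}\}_{i=1}^k$. The paper's proof is two lines; yours is the same argument fully spelled out, though note that since $(s,x)$ is fixed in the statement there is no need to speak of ``conditioning on $s$'' --- the salt is already deterministic.
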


\begin{proof}
    Fix $s\in [M]$, and define $h_{i,s}(x)$ to be the last $n$ bits of
    $h_i(s,x)$. Applying the previous lemma to $\{h_{i,s}\}_{i=1}^k$ yields the 
    result.
\end{proof}

\section{Proof of Theorem~\ref{THM:INVERTCHAIN}}
\label{appendix:chain}

Our proof reduces the problem of inverting a random function to the problem of 
inverting a random hash chain.

Let $\mathcal D$ be a distribution over functions from $\mathcal X$ to 
$\mathcal X$ such that for every $x\in \mathcal X$
\begin{equation}
\label{eqn:property}
\Pr_{h_1,\dotsc,h_k\in\mathcal D}\left[L_{h_{[1,k]}(x)} \ge 
\frac {2k}{\sqrt{\epsilon}} \right] 
\le \frac \epsilon 2.
\end{equation}
Let $A$ be an oracle algorithm, which, given oracle access to 
$h_1,\dotsc,h_k\in \mathcal D$ and $S$ bits of advice, makes at most $T$ oracle 
queries to all of its oracles combined and successfully inverts with probability 
$\epsilon\in(0,1)$. 
For any choice of functions $\{h_i\}_{i=1}^k\in\mathcal D$, let 
\[
G_{h_1,
    \dots,h_k}=
\left\{\,x\in\mathcal X
: \left(A(h_{[1,k]}(x)) = x \right)\wedge 
\left(L_{h_{[1,k]}(x)} \le \frac {2k}{\sqrt \epsilon}\right)   
\,
\right\}
\] be 
the set of \emph{good points}, where we 
say that a point is good if $A$ outputs $x$ when 
executed on $h_{[1,k]}(x)$, and the point does not have many collisions  
under  $h_{[1,k]}$. Note that the first condition is stronger than 
the condition that $A$ merely inverts $h_{[1,k]}(x)$.   
Denote by $h_{[1,k]}(G_{h_1,\dots,h_k})$ the corresponding set of 
good images. 
Observe that the second condition above guarantees 
that each point in the image $h_{[1,k]}(G_{h_1, \dots,h_k})$ has at most $
\frac{2k}{\sqrt
    \epsilon}$  preimages under $h_{[1,k]}$. Using this observation and a union 
bound, 
we conclude that
\small{%
    \begin{align*}
    &\Pr_{\substack{h_1,\dotsc,h_k \\ x\in \mathcal X }} \left[x \in 
    G_{h_1,
        \dots,h_k} 
    \right]  \ge 
    \frac {\sqrt \epsilon}{2k} \cdot \Pr_{\substack{h_1,\dotsc,h_k \\ x
            \in \mathcal X 
    }} \left[h_{[1,k]}(x) \in  h_{[1,k]}(G_{h_1,\dots,h_k})\right] \\
    &\ge \frac {\sqrt \epsilon}{2k} \cdot  \left(
    \Pr_{\substack{h_1,\dotsc,h_k \\ x\in \mathcal X }} \left[A(h_{[1,k]}(x)) 
    \in 
    h^{-1}_{[1,k]}(h_{[1,k]}(x))\right] 
    \right. \\ &\hspace{3.7em}-\left.
    \Pr_{\substack{h_1,\dotsc,h_k \\ x\in \mathcal 
            X 
    }} 
    \left[L_{h_{[1,k]}
        (x)} 
    \ge \frac {2k}{\sqrt \epsilon}
    \right] \right).
    \end{align*}
}
\normalsize

Plugging in the probabilities given by the theorem hypothesis and Equation~
\ref{eqn:property}, we obtain
\begin{align*}
\Pr_{\substack{h_1,\dotsc,h_k \\ x\in \mathcal X }} \left[x \in G_{h_1,
    \dots,h_k} 
\right] 
&\ge \frac {\sqrt \epsilon}{2k} \cdot (\epsilon-\tfrac 
\epsilon 2)
\ge \frac {\epsilon^{3/2}}{4k}.
\end{align*}

For any $i\in[k]$, let $G^i_{h_1,\dotsc,h_k}$ be the subset of points
in $G_{h_1,\dotsc,h_k}$ on which $A$ queries its $i$-th oracle function at 
most $\frac{2T}{k}$ times. 
Note that for every input and every choice of hash functions, the total number 
of queries is at most $T$, and so for every input, $A$ queries at least $1/2$ 
of 
its oracle functions at most $\frac{2T}k$ times. Therefore
\small
\begin{align*}
&\Pr_{\substack{h_1,\dotsc,h_k \\ x\in \mathcal X \\i\in[k] }} \left[x \in 
G^i_{h_1,
    \dots,h_k} 
\right]    \\
&=\Pr_{\substack{h_1,\dotsc,h_k \\ x\in \mathcal X \\i\in[k] }}\left[x\in 
G^i_{h_1,\dotsc,h_k}\middle| x \in G_{h_1,
    \dotsc,h_k} \right] \cdot \Pr_{\substack{h_1,\dotsc,h_k \\ x\in \mathcal X 
}} 
\left[x \in G_{h_1,
    \dots,h_k} 
\right] \\ &\ge \frac {\epsilon^{3/2}}{8k}.
\end{align*}
\normalsize

Therefore, there exists some fixed index $i^*\in[1,k]$ and some fixed
choice of all  the other hash functions 
$h_1, \dotsc,h_{i^*-1},h_{i^*+1},\dotsc,h_k$ that achieves    
a probability of at least $\frac {\epsilon^{3/2}}{8k}$ over a random 
$h_{i^*}$ and a random ${x\in 
    G_{h_1,\dotsc,h_k}}$. For every function $h$, denote by $G^{i^*}_h$ 
the 
set 
$G^{i^*}_{h_1,\dotsc,h_k}$ with $h_{i^*}=h$ and the other functions 
fixed 
as 
above. We get
\begin{align}
\label{eqn:T_on_G}
\P_{\substack{h \in \mathcal D\\ x \in\mathcal X}}
\left[x \in G_h^{i^*}
\right] \ge \frac {\epsilon^{3/2}}{8k}.
\end{align}

The choice of $i^*$ as well as the explicit description of the 
functions
$h_1, \dotsc,h_{i^*-1},h_{i^*+1},\dotsc,h_k$ can be hard-coded into 
the 
algorithm $A$ since Theorem~\ref{thm:dgk}, and therefore also our 
reduction, can be arbitrarily non-uniform in the \emph{input size}. Another 
way of 
thinking 
about this is that since our model charges the algorithm only for 
oracle 
queries, an algorithm in this model can deterministically determine 
the 
best $i^*$ and the remaining functions by simulating $A$'s behavior 
on all 
possible inputs (without making any oracle queries). 

Consider the following algorithm $A'$ for inverting 
a random 
function $h\in \mathcal D$. Algorithm $A'$ gets the same $S$ bits of 
advice as 
$A$ and is given oracle access to $h$. On input $z\in\mathcal X$, $A'$ 
computes 
$y=h_{[i^*+1,k]}(z)$ and then simulates $A$ on $y$ as 
follows: $A'$ uses its own oracle to answer oracle queries to $h_{i^*}$ 
and 
uses the chosen functions $h_1, \dotsc,h_{i^*-1},h_{i^*+1},\dotsc,h_k$ to 
answer 
all other oracle queries. 
Furthermore, $A'$ bounds the number of queries to $h$ by $\frac {2T}k$. Thus, 
if 
during the simulation $A$ tries to make more than this number of queries to 
$h$, 
algorithm $A'$ aborts. Otherwise, $A'$ obtains $A(y)$ and then 
computes and 
outputs $h_{[1,i^*-1]}(A(y))$.

To analyze the success probability of $A'$, the key observation is 
that if $w\in h_{[1,i^*-1]}(G^{i^*}_h)$, then $A'$ 
inverts $h(w)$ successfully. To see this, note that if 
$w=h_{[1,i^*-1]}(x)$ 
for 
$x\in G_h^{i^*}$, then $A'$ simulates $A$ on 
\[
y=h_{[i^*+1,k]}(h(w))=h_{[1,k]}(x) \in G_h^{i^*},
\]
thus $A(y)=x$, and $A'(h(w)) = h_{[1,i^*-1]}(A(y)) =w $ as 
desired. 
Therefore
\begin{align*}
\Pr_{\substack{h\in\mathcal D\\ w\in \mathcal X 
}}
\left[A'(h(w)) \in h^{-1}(h(w))\right] 
&\ge \Pr_{\substack{h\in\mathcal D \\ w\in \mathcal X }} 
\left[w \in h_{[1,i^*-1]}(G') \right] \\
&= \Pr_{\substack{h\in\mathcal D \\ x\in \mathcal X }} 
\left[x \in G_h^{i^*}\right] \ge \frac {\epsilon^{3/2}}
{8k},
\end{align*}
where the penultimate equality holds because $h_{[1,k]}$ and 
therefore also 
$h_{[1,i^*-1]}$ have no collisions on $G_h^{i^*}$, and the last 
inequality follows from Equation~\ref{eqn:T_on_G}.

To complete the proof of the theorem, we apply the lower bound given by 
Theorem~\ref{thm:dgk} to algorithm $A'$ and the distribution $S_{M,N}$, which gives 
\[
\frac{2T}{k}\left(1 + \frac{S}{M}\right) \ge \tilde{\Omega}
\left(\frac{\epsilon^{3/2}N}{8k}\right),
\]
and therefore 
\[
T\left(1 + \frac{S}{M}\right) \ge \tilde{\Omega}(\epsilon^{3/2}N)
\]
as required. \qed

\bibliographystyle{ACM-Reference-Format}
\bibliography{chain}

\end{document}